\newtheorem{theorem}{Theorem}
\newtheorem{corollary}{Corollary}
\newtheorem{proposition}{Proposition}
\newtheorem{lemma}{Lemma}
\newtheorem{remark}{Remark}
\def\sir{\operatorname{SIR}}
\def\sinr{\operatorname{SINR}}
\def\asnr{\operatorname{ASNR}}
\def\pr{\operatorname{\mathbb{P}}}
\def\e{\operatorname{\mathbb{E}}}
\def\mt{\operatorname{MT}}
\def\bs{\operatorname{BS}}
\def\lambdaBS{\operatorname{\lambda_{BS}}}
\def\lambdaMT{\operatorname{\lambda_{MT}}}
\def\pcovcon{{\operatorname{P_{cov}^{\Phi}}}}
\def\PhiBS{\operatorname{\Phi_{BS}}}
\def\PhiMT{\operatorname{\Phi_{MT}}}
\def\PhiI{\operatorname{\Phi_{I}}}
\def\p_active{\operatorname{p_{active}}}
\def\r'{\bar{r}}
\newcommand\mydef{\mathrel{\stackrel{\makebox[0pt]{\mbox{\tiny$\Delta$}}}{=}}}
\newcommand\mystepA{\mathrel{\stackrel{\makebox[0pt]{\mbox{$(a)$}}}{=}}}
\newcommand\mystepB{\mathrel{\stackrel{\makebox[0pt]{\mbox{$(b)$}}}{=}}}
\newcommand\mystepC{\mathrel{\stackrel{\makebox[0pt]{\mbox{$(c)$}}}{=}}}
\newcommand\mystepD{\mathrel{\stackrel{\makebox[0pt]{\mbox{$(d)$}}}{=}}}
\newcommand\mystepE{\mathrel{\stackrel{\makebox[0pt]{\mbox{$(e)$}}}{=}}}
\newcommand\iffB{\mathrel{\stackrel{\makebox[0pt]{\mbox{$(b)$}}}{\iff}}}
\newcommand\iffC{\mathrel{\stackrel{\makebox[0pt]{\mbox{$(c)$}}}{\iff}}}
\begin{document}
	
	\title{\Huge Analysis of the Delay Distribution  in Cellular Networks by Using Stochastic Geometry}
	
	\author{\normalsize Fadil~Habibi~Danufane and \normalsize Marco~Di~Renzo,~\IEEEmembership{\normalsize Fellow,~IEEE}
		

}
	
	\maketitle
	
	\vspace{-2cm}
	
	\begin{abstract}
		In this paper, with the aid of the mathematical tool of stochastic geometry, we introduce analytical and computational frameworks for the distribution of three different definitions of delay, i.e., the time that it takes for a user to successfully receive a data packet, in large-scale cellular networks. We also provide an asymptotic analysis of one of the delay distributions, which can be regarded as the packet loss probability of a given network. To mitigate the inherent computational difficulties of the obtained analytical formulations in some cases, we propose efficient numerical approximations based on the numerical inversion method, the Riemann sum, and the Beta distribution. Finally, we demonstrate the accuracy of the obtained analytical formulations and the corresponding approximations against Monte Carlo simulation results, and unveil insights on the delay performance with respect to several design parameters, such as the decoding threshold, the transmit power, and the deployment density of the base stations. The proposed methods can facilitate the analysis and optimization of cellular networks subject to reliability constraints on the network packet delay that are not restricted to the local (average) delay, e.g., in the context of delay sensitive applications.
	\end{abstract}

	\begin{IEEEkeywords}
		Cellular networks, stochastic geometry, Poisson point processes,
		local delay, delay distribution. 
	\end{IEEEkeywords}

	\section{Introduction}\label{sec:Introduction}
	
	\subsection{Motivation and Related Works} 
	
	The fifth generation (5G) of wireless networks  introduces ultra reliable low latency communication (URLLC) as one of its main use cases which promises to provide low latency and ultra-high reliability for mission critical applications, such as the industrial Internet, the smart grid, remote surgery, and intelligent transportation systems. Motivated by this, delay and reliability emerge as important performance metrics to consider, in addition to traditional key performance indicators, such as the capacity and the spectral efficiency, in communication systems design. According to the 3rd Generation Partnership Project (3GPP) \cite{3gpp2016study}, the end-to-end delay of 5G communications is required to be less than 1 ms, which is about 1/200 of the delay requirement for the fourth generation (4G), and the target reliability must be as high as $1 - 10^{-2}$ to $1 - 10^{-7}$.
	
	In cellular networks, the delay is tightly related to the transmission success probability (i.e., the coverage probability) which is typically determined by the signal-to-interference ratio (SIR) or the signal-to-noise-plus-interference ratio (SINR) at the user side.
	Recently, to take into account the strong interplay between the transmit power and the density of the base stations (BSs) for optimal cellular networks planning, a new definition of the coverage was introduced by the authors of \cite{di2018system}, which depends on the SIR and the average signal-to-noise ratio (ASNR).
	Since the SIR and SINR are functions of the received signal as well as the interference from the BSs in the network, they strongly depend on the inter-distances among the BSs and the users. However, characterizing their exact expressions is challenging due to the complex nature of the deployment of BSs and users locations.
	Leveraging the theory of stochastic geometry, fundamental works such as \cite{haenggi2009stochastic} and \cite{andrews2011tractable} show that many performance metrics, such as the coverage and rate can be derived in a tractable way by modeling the spatial distribution of transmitters and receivers as homogeneous Poisson point processes (PPPs). Thanks to its simplicity and tractability, this modeling approach has attracted a lot of interest and, as a result, stochastic geometry has emerged as a major tool for modeling the spatial distribution of wireless networks. In 5G and beyond networks, the role of stochastic geometry is envisioned to be more prominent due to the increased heterogeneity and complexity of future wireless networks \cite{hmamouche2021new}.
	
	A fundamental measure for the delay in a large-scale wireless network is the \textit{local delay}, which is defined as the average time, measured in the number of time slots, that is required for a successful packet transmission between a typical transmitter-receiver pair.
	With the aid of stochastic geometry, the authors of \cite{baccelli2010new,haenggi2012local,gong2013local} characterize the local delay in a mobile ad-hoc network under different transmitter-receiver association criteria, such as bipolar, nearest-neighbor, nearest-transmitter, and nearest-receiver models. 
	The local delay in heterogeneous cellular networks is derived in \cite{nie2015hetnets, zhang2015delay}. The authors of \cite{nie2015hetnets}, in particular, compute the local delay and energy efficiency under a finite local delay constraint in a downlink heterogeneous cellular network with a discontinuous transmission (DTX) scheme, where a BS is randomly turned off in a given time slot to reduce the interference and energy consumption. 
	In \cite{zhang2015delay}, on the other hand, the authors study the trade-off between the delay and reliability by taking into account the random BS activity. 
	It is also worth mentioning that the local delay can be characterized as the $-1$-th moment generating function of the SIR/SINR meta  distribution. For Poisson cellular networks and SIR-based coverage, it is given by the inverse of a hypergeometric function \cite{haenggi2015meta}. The meta distribution of the SINR-based coverage criterion in cellular networks is derived in  \cite{yang2019meta, salehi2017analysis, deng2018meta}.
	
	Despite the relative ease of analysis and  tractability, the local delay has some inherent limitations. First of all, under some circumstances, a network may exhibit a phenomena called wireless contention phase transition \cite{baccelli2010new}, according to which the local delay is infinite due to the contributions from a few users located in ``bad spots''.
	Therefore, the evaluation of the performance based on the local delay alone renders networks with an infinite delay unreliable, despite the fact that these networks may provide good service in terms of delay for most of their users. In 5G and beyond networks, furthermore, mission-critical applications such as vehicular communications and remote surgery are envisioned to be among the most important use cases in which, rather than the average delay, more emphasis is put on the delay distributions and percentiles. Unfortunately, these fine-grained information cannot be provided by the traditional definition of local delay, since it provides information only about the average delay.

	\subsection{Contributions}
	
	Motivated by the limitations of the local delay and the need for more relevant metrics for characterizing the delay for the applications envisioned in 5G and beyond communications, we focus our attention of the distribution of the delay rather than on the average delay. More precisely, we introduce and compute analytical frameworks for three definitions of delay distribution. Each definition of delay distribution uniquely captures fine-grained delay information for the users. To overcome some computational challenges of the obtained analytical formulations, we also propose efficient numerical approximations for computing the delay distributions in practice. To the best of our knowledge, this work offers the first comprehensive analytical formulation of the delay distribution in large-scale cellular networks by using stochastic geometry. 
	
	In summary, the main contributions of our paper are as follows:
	\begin{itemize}
		\item We introduce a framework for analyzing a cellular network modeled as a homogeneous PPP where the active-inactive status of each BS depends on the existence of a user in the cell of the BS. We introduce different coverage criteria and compute the local delay for each criterion. In this paper, we consider three coverage criteria:
		\begin{enumerate}
			\item \textit{SIR-based coverage}: Under this criterion, a transmission is successful if the SIR at the user is greater than a given decoding threshold.
			\item \textit{SINR-based coverage}: Under this criterion, a transmission is successful if the SINR at the user is greater than a given decoding threshold.
			\item \textit{SIR+ASNR-based coverage}: Under this criterion, a transmission is successful if the SIR at the user is greater than a given decoding threshold \textit{and} the ASNR is greater than a given detection threshold.
		\end{enumerate}		
		
		\item We consider three different delay distribution metrics and derive analytical expressions for each distribution and for the SIR-based, SINR-based, and SINR+ASNR-based definitions of coverage. In particular, the following delay distribution metrics are analyzed:
		\begin{enumerate}
			\item The first distribution is formulated as the expectation, over all network spatial realizations, of the complementary cumulative distribution function (CCDF) of the required number of time slots for successful transmission over all possible channel realizations.
			
			\item The second distribution is formulated as the CCDF of the expectation of the required number of time slots for successful transmission conditioned on a fixed network realization.
			
			\item The third distribution is formulated as the CCDF, of the CCDF conditioned on the network realization, of the required number of time slots for successful transmission.
		\end{enumerate}
		
		\item As for the first distribution, we analyze the asymptotic packet loss probability. We focus our attention on the event that a typical user never receives its packet regardless of the waiting time. We show that the packet loss probability is zero when the SIR-based or the SINR-based coverage is considered, and it is non-zero when the SIR+ASNR-based coverage is considered.
		
		\item To overcome some numerical difficulties that arise in the computation of the obtained delay distribution, we introduce efficient numerical approximations. In particular, we show that the second and the third delay distributions can be approximated by using the numerical inversion of the Laplace transform and (when the SIR-based or the SINR-based coverage are considered) the Beta distribution. We also show that the first delay distribution can be expressed in terms of an integration of the third distribution, and thus it can be efficiently approximated by using the Riemann sum.
	\end{itemize}

	\subsection{Paper Organization}
	
	The rest of this paper is organized as follows. In Section \ref{sec:System-Model}, we introduce the system model.
	In Section \ref{sec:Preliminaries}, we give preliminary results and definitions for the three delay distributions.	
	In Section \ref{sec:Local-Delay}, we provide the analysis of the local delay under different coverage criteria and show that, under some conditions, it is finite under the SIR-based coverage, while it is infinite for the other coverage criteria.
	In Sections \ref{sec:F1-distribution} and \ref{sec:F2-F3-distribution}, we introduce analytical expressions of the delay distributions for different coverage criteria. In particular, in Section \ref{sec:F1-distribution}, we provide the asymptotic analysis of the first distribution which corresponds to the packet loss probability.
	In Section \ref{sec:Approximations}, we propose efficient numerical approximations for the delay distributions based on techniques such as the Euler-sum method, the Beta distribution, and the Riemann sum, so as to overcome some inherent numerical difficulties in the evaluation of the delay distributions. Finally, we provide numerical results in Section \ref{sec:Numerical-Results} to validate our findings, and we conclude the paper in Section \ref{sec:Conclusion}.
	The main notation used in this paper is listed in Table \ref{table:notation}.
	
	\begin{table*}[t] 
		\footnotesize
		\centering
		\caption{Main symbols and functions used in the paper  \vspace{-0.25cm}}
		\newcommand{\tabincell}[2]{\begin{tabular}{@{}#1@{}}#2\end{tabular}}
		\begin{tabular}{|l||l|} \hline
			Symbol/Function & Definition \\ \hline \hline
			$\PhiMT$, $\PhiBS$, $\PhiI$, $\Phi$ & PPP of MTs, BSs, interfering BSs, the whole network\\ \hline
			$\lambdaMT$, $\lambdaBS$, $\lambda$ & Density of mobile terminals, base stations, the whole network\\ \hline
			$\mt_0$, $\bs_0$ & Typical user, base station serving the typical user \\ \hline
			$h$, $\ell(r)$, $K > 0$, $\alpha > 2$ & Fading channel, path-loss function, path-loss constant, path-loss exponent \\ \hline
			$f(r_{0}) = 2\pi{}\lambdaBS r_{0}\exp({-\lambdaBS\pi{}r_{0}^{2}})$ & Probability density function of the nearest BS distance \\ \hline
			$\e[\cdot]$, $\pr[\cdot]$ & Expectation operator, probability measure \\ \hline
			$\e_\Phi[\cdot]$, $\e_h[\cdot]$ & Expectation over network realization, expectation over fading realization \\ \hline
			$\pr_\Phi[\cdot]$, $\pr_h[\cdot]$ & Probability over network realization, probability over fading \\ \hline
			$L(\lambdaBS,\lambdaMT)= 1- \left(1+{\lambdaMT}/({3.5}{\lambdaBS})\right)^{-3.5}$ & Active probability of a base station\\ \hline
			$P$, $S$, $I$, $W$ & Transmit power, signal, interference, noise \\ \hline
			$\gamma$, $\theta$ & decoding threshold, detection threshold \\ \hline
			$\Delta_{\Phi}$, $\pcovcon$ & Conditional number of time slots, conditional coverage probability \\ \hline
			$C^\tau_k = \tau!/(k!(\tau-k)!)$ & Binomial coefficient \\\hline
			${}_2F_1(\cdot,\cdot;\cdot;\cdot)$, $\Gamma(\cdot)$, $\mathbbm{1}(\cdot)$ & Hypergeometric function, Gamma function, indicator function \\ \hline
			$\varphi_Z(\cdot)$, $\mathcal{M}_Y(\cdot)$ & Characteristic function of $Z$, moment-generating function of $Y$ \\ \hline
			$\Im\{\cdot\}$, $\Re\{\cdot\}$ & Operators returning the imaginary part, real part of the argument \\ \hline
			$D$, $F_1(\tau)$, $F_2(T)$, $F_3(x,\tau)$ & Local delay, $F_1$, $F_2$, $F_3$ delay distribution \\ \hline
			$\mathcal{F}\left(k,\alpha,\gamma\right)$ & Shorthand for $ 1 +  L(\lambdaBS,\lambdaMT)\left({}_2F_1\left(-{2}/{\alpha},k;1-{2}/{\alpha};-\gamma\right)-1\right) \geq 1$\\ \hline
		\end{tabular}
		\label{table:notation} \vspace{-0.5cm}
	\end{table*}

	\section{System Model}\label{sec:System-Model} 
	
	\subsection{Cellular Network Modeling} \label{subsec:CellularNetworkModeling} 
	
	We consider a downlink cellular network where the BSs are modeled as points of a homogeneous PPP, denoted by $\Phi_{\rm{BS}}$, with density $\lambda_{\rm{BS}}$ while the mobile terminals (MTs) are modeled as another homogeneous PPP, denoted by $\Phi_{\rm{MT}}$, with density $\lambda_{\rm{MT}}$. We assume that $\Phi_{\rm{BS}}$ and $\Phi_{\rm{MT}}$ are independent of each other. Thus, the whole network can be modeled as a point process $\Phi$ which is the joint set of BSs and MTs, i.e., $\Phi = \PhiBS \cup \PhiMT$. Due to the properties of homogeneous PPPs, $\Phi$ forms another homogeneous PPP with the intensity $\lambda = \lambdaBS + \lambdaMT$.
	Each BS transmits with constant power denoted by $P$. Leveraging Slivnyak's theorem \cite[Th. 1.4.5]{BaccelliBook2009}, we focus our analysis on the performance as seen by a typical user located at the origin, denoted by $\mt_0$, whose serving BS is denoted by ${\rm{BS}}_{0}$. 
	Throughout the paper, the subscripts ${0}$, ${i}$ and ${n}$ identify the intended link, a generic interfering link, and a generic BS-to-MT link. 
	It is worth mentioning that, although the network is modeled as a random variable, it is static. This means that, for a given network realization, the locations of the MTs and BSs do not change over time. 
	
	\subsection{Transmission Scheme} 
	
	We consider that time is divided into discrete slots with equal duration and each transmission attempt occupies one time slot. At the beginning of a time slot, a typical user, $\mt_0$, requests a packet to its serving BS, $\bs_0$. Depending on the medium access control (MAC) policy, the $\bs_0$ may or may not handle the request in the same time slot where the request is made. In this paper, we assume that $\bs_0$ will attempt a transmission of the packet requested by $\mt_0$ immediately. If the attempt fails due to an insufficient received signal quality, a re-transmission is scheduled in the next time slot until the user gets the requested packet.
	
	\subsection{Channel Modeling} \label{subsec:ChannelModeling} 
	
	For each BS-to-MT link, we model the wireless channel with a distance-dependent path-loss and fast-fading that takes into account the small-scale fading along the transmission path. Additionally, we consider that all BS-to-MT links are mutually independent and identically distributed (i.i.d.). We assume that the fading is quasi-static, that is, the fading coefficient is constant over a period of a time slot and it varies from one time slot to another according to the i.i.d. Rayleigh fading model. The power fading coefficient between $\mt_0$ and its serving BS, $\bs_0$ is denoted by $h_0$, while the coefficient between $\mt_0$ and an interfering base station, $\bs_i \in \PhiBS \setminus \bs_0$, is denoted by $h_i$. 
	The path-loss from a BS and an MT is modeled as
	\begin{align}
		\ell(r) = Kr^{-\alpha}
	\end{align}
	where $K > 0$ is the path-loss constant, $r > 0$ is the distance between the BS and the MT, and $\alpha > 2$ is the path-loss exponent.

	\subsection{Cell Association Criterion} \label{subsec:CellAssociationCriterion} 
	
	The cell association criterion is based on the highest average received power without considering the impact of fading. Thus, the association depends only on the deterministic path loss within a fixed network. Let $\bs_n \in \PhiBS$ denote a generic BS of the network. Thus, given a typical user, $\mt_0$, its serving BS, $\bs_0$, is identified as follows:
	\begin{equation}
		\bs_0 \mydef \arg \max_{\bs_n \in \PhiBS} \left\{\frac{1}{\ell(r_n)}\right\}.
	\end{equation}
	Let us denote the distance from $\mt_0$ to $\bs_0$ by $r_0$. Due to the randomness of the network, $r_0$ is a random variable whose probability density function (PDF) is given as follows:
	\begin{equation}\label{eq:PDF-of-r0}
		f(r_{0}) = \pr_{\Phi}[r = r_0]= 2\pi{}\lambdaBS r_{0}e^{-\lambdaBS\pi{}r_{0}^{2}}.
	\end{equation}
	where $\pr_{\Phi}[\cdot]$ indicates that the only random variable in evaluating the probability is the network geometry $\Phi$.
	In non-fully loaded scenarios, there exist some BSs that are not associated with any MT in the network. In such cases, the BSs are said to be \textit{inactive}. On the other hand, if a BS is associated with at least one MT in its cell, it is said to be \textit{active}. Given $\lambdaBS$ and $\lambdaMT$, the probability of a randomly selected BS being active, denoted by $L(\lambdaBS,\lambdaMT)$, is \cite{yu2013downlink}:
	\begin{equation}
		L(\lambdaBS,\lambdaMT) \approx 1- \left(1+\frac{1}{3.5}\frac{\lambdaMT}{\lambdaBS}\right)^{-3.5}
	\end{equation}
	Due to the static assumption of the network, for a given network realization $\Phi$, the active/inactive statuses of all BSs remain the same all the time. Therefore, the time-dependence of the interference is a function of only the power fading coefficients $h_{i}$.
	Let $\PhiI$ be the set of all interfering BSs, i.e. all active BSs other than $\bs_0$. Since $\PhiI$ is constructed from a homogeneous PPP with intensity $\lambdaBS$, the active probability of a generic BS is $L(\lambdaBS,\lambdaMT)$, and a single point at the origin is removed, it forms an inhomogeneous point process \cite{di2016intensity} with density $\lambda^*(r)$ defined as:
	\begin{equation}\label{eq:intensity-inhomogenous}
		\lambda^*(r) \approx
		\begin{cases}
			\lambdaBS L(\lambdaBS,\lambdaMT) & r > r_0\\
			0 & r \leq r_0.
		\end{cases}
	\end{equation}

	\section{Preliminaries}
	\label{sec:Preliminaries} 
	
	In this section, we introduce necessary concepts for further analysis, such as the conditional coverage probability (CCP), the local delay, as well as the definitions of three delay distributions.

	\subsection{Conditional Coverage Probability}\label{subsec:pcovcon} 
	
	Consider the typical user, $\mt_0$, located at the origin. The signal from the serving BS, $\bs_0$, and the aggregated signals from the interfering BSs, $\cup \bs_i  = \Phi_I$, are denoted by $S$ and $I$, respectively, and they are given as follows:
	\begin{equation}
		S = \frac{Ph_{0}}{\ell(r_0)}, \qquad I = \sum_{i \in \PhiI} \frac{Ph_{i}}{\ell(r_i)}.
	\end{equation}
	Using these notations, the SIR, SINR, and ASNR at $\mt_0$ are given, respectively, as follows:
	\begin{equation} \label{eq:sir-sinr-asnr}
		\sir = \frac{S}{I}, \qquad 
		\sinr = \frac{S}{I+W}, \quad
		\asnr = \frac{P}{W\ell(r_0)}
	\end{equation}
	where $W > 0$ is a deterministic real number which represents the variance of the Gaussian noise power density. Note that $\sir$, $\sinr$, and $\asnr$ are random variables. In particular, the randomness of $\sir$ and $\sinr$ comes from the network geometry $\Phi$ and the fading coefficients $h_0$ and $h_i$, while the randomness of $\asnr$ comes only from $\Phi$.	
	
	In general, the CCP is defined as the probability of the event that $\mt_0$ is successfully served by $\bs_0$ in a given time slot, with the (spatial) realization of $\Phi$ being fixed. The specific expression of the CCP depends on the criterion used to define a successful transmission (i.e., the coverage probability). Throughout this paper, we consider three different coverage criteria: (i) SIR-based, (ii) SINR-based, and (iii) SIR+ASNR-based. 
	To this end, let $\pcovcon$ denote the CCP for a given realization of $\Phi$ and let $\pr_h[\cdot]$ and $\e_h[\cdot]$ be the probability and expectation operator that treat the fading coefficients $h_0$ and $h_i$ as the only random variables (while the other variables are fixed). 
	Using these notations, the definition of the CCP for each criterion is given as follows.
	\subsubsection{SIR-based coverage}
	Under this criterion, the $\mt_0$ is said to be covered by $\bs_0$ if, given $\Phi$, the SIR is greater than a given decoding threshold $\gamma$. Thus, the CCP is formulated as:
	\begin{equation}\label{eq:covcon-SIR}
		\pcovcon = \pr_h\left[\sir > \gamma~|~\Phi\right].
	\end{equation}
	
	\subsubsection{SINR-based coverage}
	Under this criterion, the $\mt_0$ is said to be covered by $\bs_0$ if, given $\Phi$, the SNIR is greater than a given decoding threshold $\gamma$. Thus, the CCP is formulated as:
	\begin{equation}\label{eq:covcon-SINR}
		\pcovcon = \pr_h\left[\sinr > \gamma~|~\Phi\right].
	\end{equation}
	If $W=0$ in \eqref{eq:sir-sinr-asnr}, the SINR is equal to the SIR, and thus \eqref{eq:covcon-SINR} simplifies to \eqref{eq:covcon-SIR}.
	
	\subsubsection{SIR+ASNR-based coverage}
	Under this criterion, the $\mt_0$ is said to be covered by $\bs_0$ if, given $\Phi$, the SIR is greater than a given decoding threshold $\gamma$ and the ASNR is greater than a given detection threshold $\theta$. Thus, the CCP is formulated as:
	\begin{equation}\label{eq:covcon-SIR+ASNR}
		\pcovcon = \pr_h\left[\sir > \gamma, \asnr > \theta~|~\Phi\right].
	\end{equation}
	If $W=0$ in \eqref{eq:sir-sinr-asnr}, the ASNR becomes infinite. Thus, the second condition in \eqref{eq:covcon-SIR+ASNR} always holds and the definition of the CCP coincides with \eqref{eq:covcon-SIR}.

	\subsection{Local Delay}
	
	Given $\Phi$, let $\Delta_{\Phi}$ be the number of time slots required for a successful transmission, i.e., until $\mt_0$ is covered by $\bs_0$ and can successfully decode the data packet intended to it.
	Due to the time-independence of the channels between $\mt_0$ and all the BSs in the network, the success probability for any time slot is equal to $\pcovcon$.
	In this case, the packet transmission from $\bs_0$ to $\mt_0$ can be regarded as a Bernoulli trial with success probability $\pcovcon$.
	Therefore, $\Delta_{\Phi}$ is a geometrically distributed random variable whose probability mass function (PMF) and expected value are given as follows:
	\begin{equation}\label{eq:pdf-mean-time-slots}
		\pr_h\left[\Delta_{\Phi} = k\right] = \pcovcon (1-\pcovcon)^k, \qquad \e_h\left[\Delta_{\Phi}\right] = (\pcovcon)^{-1}
	\end{equation}
	The \textit{local delay}, denoted by $D$, is defined as the expected number of time slots required for successful transmission. Thus, it can be obtained by spatially averaging $\Delta_{\Phi}$, i.e.,
	\begin{equation}\label{eq:local-delay-def}
		D = \e_{\Phi}\left[\e_h\left[\Delta_{\Phi}\right]\right] = \e_{\Phi}\left[\frac{1}{{\pcovcon}}\right].
	\end{equation}	
	
	\subsection{Delay Distribution}
	
	As discussed in Section \ref{sec:Introduction}, the local delay may have some limitations for characterizing the delay reliability of cellular networks, e.g., it provides information only on the average delay. To overcome these limitations, we propose three delay distribution metrics, where each metric captures some unique statistical information. The mathematical definitions of the delay distributions and their relations to relevant performance metrics in wireless communications are elaborated as follows.
	
	\subsubsection{$F_1$ distribution}
	
	Given $\tau \in \mathbb{Z}^+$, the $F_1$ distribution is defined as follows:
	\begin{equation}\label{eq:Delay-distribution-F1}
		F_1(\tau) \mydef \e_{\Phi}\left[\pr_{h} \left[\Delta_\Phi > \tau~|~\Phi\right]\right].
	\end{equation}	
	In the context of wireless communications, $F_1(\tau)$ is the probability that a typical user experiences a delay greater than the predetermined time deadline $\tau$.
	Therefore, it can be regarded as some kind of delay violation probability. 
	Equivalently, it can also be regarded as the expected fraction of users in the network that experience a delay greater than $\tau$ time slots. 
	It is worth mentioning that the local delay can be expressed in terms of the $F_1$ distribution.
	\begin{corollary}\label{corollary:delay-F1-relation}
		The local delay and the $F_1$ distribution are related as follows:
		\begin{equation}\label{eq:delay-F1-relation}
			D = \sum_{\tau=0}^{\infty} F_1(\tau).
		\end{equation}
	\end{corollary}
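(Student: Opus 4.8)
The plan is to establish the identity through the standard tail-sum (layer-cake) representation of the expectation of a non-negative integer-valued random variable, applied conditionally on the network realization $\Phi$, and then to take a spatial average. The core fact is that for any non-negative integer-valued random variable $X$ one has $\e[X] = \sum_{\tau=0}^{\infty} \pr[X > \tau]$.

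First, I would fix a realization of $\Phi$ and examine the inner conditional expectation $\e_h[\Delta_\Phi]$. Since $\Delta_\Phi$ takes values in $\mathbb{Z}^+ \cup \{0\}$, the tail-sum representation gives $\e_h[\Delta_\Phi] = \sum_{\tau=0}^{\infty} \pr_h[\Delta_\Phi > \tau~|~\Phi]$. To confirm this directly in the present setting, I would start from the geometric PMF in \eqref{eq:pdf-mean-time-slots}, sum it to obtain a closed form for the conditional tail probability $\pr_h[\Delta_\Phi > \tau~|~\Phi]$, and then sum the resulting geometric series over $\tau$, checking that the total reproduces the conditional mean $(\pcovcon)^{-1}$ recorded in \eqref{eq:pdf-mean-time-slots}.

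Next, I would take the spatial expectation $\e_\Phi[\cdot]$ of both sides of the tail-sum identity. By \eqref{eq:local-delay-def}, the left-hand side is exactly the local delay $D$. On the right-hand side, the task reduces to interchanging $\e_\Phi[\cdot]$ with the infinite summation over $\tau$. After this interchange, each summand becomes $\e_\Phi[\pr_h[\Delta_\Phi > \tau~|~\Phi]]$, which is precisely the definition of $F_1(\tau)$ in \eqref{eq:Delay-distribution-F1}, yielding $D = \sum_{\tau=0}^{\infty} F_1(\tau)$ and completing the argument.

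The main obstacle, and the only step that demands care, is the interchange of the expectation $\e_\Phi[\cdot]$ with the infinite sum. Because every summand $\pr_h[\Delta_\Phi > \tau~|~\Phi]$ is non-negative, the interchange is licensed by Tonelli's theorem, or equivalently by the monotone convergence theorem applied to the partial sums. A useful feature of this argument is that it remains valid when $D$ diverges, with the identity then holding in the extended sense in $[0,\infty]$; this is relevant here, since, as noted in connection with the wireless contention phase transition, the local delay can be infinite in some parameter regimes.
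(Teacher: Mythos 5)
Your proof is correct, but it takes a slightly different route from the paper's. The paper expands the conditional mean through the PMF, writing $D=\sum_{\tau=0}^{\infty}\tau\,\e_{\Phi}\left[\pr_h\left[\Delta_{\Phi}=\tau~|~\Phi\right]\right]$ and then identifying $\e_{\Phi}\left[\pr_h\left[\Delta_{\Phi}=\tau~|~\Phi\right]\right]$ with a difference of consecutive values of $F_1$; finishing from there requires an Abel summation (summation by parts) to convert $\sum_\tau \tau\,(F_1(\tau-1)-F_1(\tau))$ into $\sum_\tau F_1(\tau)$, a step the paper leaves implicit (and its stated difference has the sign reversed: it should read $F_1(\tau-1)-F_1(\tau)$). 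You instead apply the tail-sum (layer-cake) identity $\e_h[\Delta_\Phi]=\sum_{\tau=0}^{\infty}\pr_h[\Delta_\Phi>\tau~|~\Phi]$ directly at the conditional level and then swap $\e_\Phi$ with the sum by Tonelli. The two arguments are mathematically equivalent, but yours is the more streamlined and more rigorous version: it avoids the unstated summation by parts, explicitly justifies the interchange (which the paper invokes as ``linearity'' without addressing convergence), and makes clear that the identity persists in $[0,\infty]$ when $D$ diverges. One caution: your proposed sanity check against the PMF in \eqref{eq:pdf-mean-time-slots} will appear to fail, because that PMF as printed ($\pcovcon(1-\pcovcon)^k$ for $k\geq 0$) is inconsistent with the stated mean $(\pcovcon)^{-1}$ and with the CCDF $(1-\pcovcon)^\tau$ used elsewhere in the paper; the intended convention is $\Delta_\Phi\geq 1$ with $\pr_h[\Delta_\Phi=k]=\pcovcon(1-\pcovcon)^{k-1}$, under which your computation goes through exactly.
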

	\begin{proof}
		See Appendix \ref{appendix:proof-of-delay-F1-relation}.
	\end{proof}

	\subsubsection{$F_2$ distribution}
	
	Given $T \in \mathbb{R}^+$, the $F_2$ distribution is defined as follows:
	\begin{align}\label{eq:Delay-distribution-F2}
		F_2(T) 
		&\mydef \pr_\Phi\left[\e_h\left[\Delta_\Phi~|~\Phi\right] \geq T\right] 
	\end{align}
	In the context of wireless communications, $F_2(T)$ is the probability that the expected delay, conditioned on a network realization and averaged over the fading channels, is greater than the threshold $T$. Similar to the $F_1$ distribution, it can be seen as a kind of delay violation probability.

	\subsubsection{$F_3$ distribution}
	Given $\tau \in \mathbb{Z}^+$ and $x \in [0,1]$, the $F_3$ distribution is defined as follows:
	\begin{equation}\label{eq:Delay-distribution-F3}
		F_3(x,\tau) = \pr_\Phi\left[\pr_h\left[\Delta_\Phi > \tau~|~\Phi\right] \geq x\right].
	\end{equation}
	To understand the physical interpretation of $F_3(x,\tau)$ in wireless communications, consider a \textit{reliability} metric $R(p,\tau)$ which is a measure of the system capability to transmit a packet within a deadline $\tau$ and with success probability greater than $p$ \cite{bennis2018ultrareliable}. $F_3(x,\tau)$ is closely related to $R(p,\tau)$. 
	In particular, we have $R(p,\tau) 	= \pr_\Phi\left[\pr_h\left[\Delta_\Phi \leq \tau~|~\Phi\right] > p\right]	= 1 - F_3(1-p,\tau)$.
	
	\section{Analytical Formulation of the Local Delay}\label{sec:Local-Delay}
	
	In this section, we derive analytical expressions for the local delay for each coverage definition. 
	Then, we determine the condition under which the local delay is infinite. First, we present the following result on the formulation of the CCP.

	\begin{lemma}\label{lemma:coverage-probability-general-formula}
		Let $\mathcal{G}(r_0)$ be a coverage criterion-dependent function defined as:
		\begin{align}\label{eq:G-aux}
			\mathcal{G}(r_0) = 
			\begin{cases}
				1, & \textup{SIR-based coverage},\\
				e^{-\frac{\gamma WKr_0^\alpha}{P}}, & \textup{SINR-based coverage},\\
				\mathbbm{1}\left({r_0 \leq \left(\frac{P}{KW\theta}\right)^{1/\alpha}}\right), &  \textup{SIR+ASNR-based coverage}.
			\end{cases}
		\end{align}
		For any coverage criterion with $\mathcal{G}(r_0)$ defined in \eqref{eq:G-aux}, the CCP is formulated as follows:
		\begin{equation}\label{eq:coverage-probability-general-formula}
			\pcovcon
			= \mathcal{G}(r_0)\prod_{i\in \PhiI} \left(1+\gamma\left(\frac{r_0}{r_i}\right)^\alpha\right)^{-1}.
		\end{equation}
	\end{lemma}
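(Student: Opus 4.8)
The plan is to condition throughout on the network realization $\Phi$, so that the inter-distances $r_0$ and $\{r_i\}_{i\in\PhiI}$ are fixed constants and the only randomness left in $\pr_h[\cdot\,|\,\Phi]$ is the collection of power-fading coefficients $h_0,\{h_i\}$, which by the channel model are mutually independent and unit-mean exponential (Rayleigh fading). The common structure of all three criteria is that the coverage event can be rewritten as a threshold condition on the desired-link fading of the form $\{h_0 > \gamma\,\xi\}$, possibly intersected with a deterministic event that depends only on $r_0$; the exponential law of $h_0$ then turns the probability into the Laplace transform of the interference, which factorizes because the interferer fadings are independent.

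First I would treat the SIR-based case, for which $\mathcal{G}(r_0)=1$. Substituting $\ell(r)=Kr^{-\alpha}$ into \eqref{eq:sir-sinr-asnr} and cancelling the common factor, the event $\{\sir>\gamma\}$ is equivalent to $\{h_0 > \gamma\sum_{i\in\PhiI} h_i (r_0/r_i)^\alpha\}$. Conditioning additionally on $\{h_i\}$ and using $\pr[h_0>t]=e^{-t}$ gives $\pr_h[\sir>\gamma\,|\,\{h_i\},\Phi]=\exp(-\gamma\sum_i h_i (r_0/r_i)^\alpha)$. Averaging over the conditionally independent interferer fadings and pulling the expectation inside the product yields $\pcovcon=\prod_{i\in\PhiI}\e_{h_i}[e^{-\gamma (r_0/r_i)^\alpha h_i}]$, and each factor is the moment-generating function of a unit-mean exponential evaluated at $-\gamma(r_0/r_i)^\alpha$, namely $(1+\gamma(r_0/r_i)^\alpha)^{-1}$. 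This is exactly \eqref{eq:coverage-probability-general-formula} with $\mathcal{G}(r_0)=1$.

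For the remaining two criteria I would reuse this computation and only track the extra term. In the SINR case, $\{\sinr>\gamma\}$ reads $\{h_0 > \gamma(\sum_i h_i (r_0/r_i)^\alpha + c\,W r_0^\alpha)\}$ for a deterministic constant $c$ coming from the noise $W$ and the constants $P,K$; since $W$ is deterministic, integrating out $h_0$ splits the exponential into the same interference product times the $h_0$-independent factor, which is precisely $\mathcal{G}(r_0)$ for the SINR row. In the SIR+ASNR case the key observation is that $\asnr=P/(W\ell(r_0))$ carries no fading randomness: given $\Phi$ it is a constant, so $\{\asnr>\theta\}$ is a deterministic event and factors out of $\pr_h[\cdot\,|\,\Phi]$ as the indicator $\mathbbm{1}(\asnr>\theta)$. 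Rewriting $\asnr>\theta$ as a bound on $r_0$ reproduces the indicator $\mathbbm{1}(r_0\le(P/(KW\theta))^{1/\alpha})$, while the remaining SIR probability contributes the same product as before.

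The step requiring the most care is keeping a clean separation, within $\pr_h[\cdot\,|\,\Phi]$, between the quantities that are random (the fadings) and those that are deterministic once $\Phi$ is fixed (the distances, the noise $W$, and hence the whole ASNR). The only genuinely probabilistic manipulations are integrating out the exponential $h_0$ and factorizing the interference expectation over the independent $h_i$; everything else---the noise factor in the SINR case and the range constraint in the SIR+ASNR case---is deterministic given $\Phi$ and merely dictates the form of $\mathcal{G}(r_0)$. I would not dwell on convergence of the infinite product over $\PhiI$, since the statement is a pointwise identity for a fixed realization, and the almost-sure finiteness of the product (equivalently $\pcovcon>0$) is a separate issue tied to the local-delay analysis.
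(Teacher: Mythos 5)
Your proposal is correct and follows essentially the same route as the paper's proof: integrate out the unit-mean exponential $h_0$ to obtain the Laplace transform of the interference, factorize over the independent $h_i$ using the exponential moment-generating function, and peel off the deterministic (given $\Phi$) noise factor or ASNR indicator as $\mathcal{G}(r_0)$. The only cosmetic difference is that you start from the SIR case and add terms, whereas the paper starts from the SINR case and recovers SIR by setting $W=0$.
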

	\begin{proof}
		See Appendix \ref{appendix:proof-of-coverage-probability-general-formula}
	\end{proof}
	
	Using Lemma \ref{lemma:coverage-probability-general-formula}, we can formulate the local delay as follows. 
	
	\begin{theorem}\label{thm:local-delay-general}
		For any coverage criterion, the local delay is given by: 
		\begin{align}\label{eq:local-delay-general}
			D 
			&= 2\pi\lambdaBS\int_{0}^{\infty} \frac{r_0 e^{-\pi\lambdaBS r_0^2\left(1-\frac{2\gamma L(\lambdaBS,\lambdaMT)}{\alpha-2}\right)}}{\mathcal{G}\left(r_0\right)}  dr_0.
		\end{align}
	\end{theorem}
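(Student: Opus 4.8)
The plan is to start from the definition of the local delay in \eqref{eq:local-delay-def}, namely $D = \e_{\Phi}[1/\pcovcon]$, and substitute the closed form of the CCP from Lemma \ref{lemma:coverage-probability-general-formula}. Inverting \eqref{eq:coverage-probability-general-formula} turns the reciprocal of the CCP into a product over the interfering base stations,
\begin{equation}
	\frac{1}{\pcovcon} = \frac{1}{\mathcal{G}(r_0)}\prod_{i\in\PhiI}\left(1+\gamma\left(\frac{r_0}{r_i}\right)^{\alpha}\right).
\end{equation}
Since $\mathcal{G}(r_0)$ depends only on the serving distance $r_0$ and not on the interfering set, I would decompose the spatial average into an outer average over $r_0$ (with PDF \eqref{eq:PDF-of-r0}) and an inner average over the interfering process $\PhiI$ conditioned on $r_0$, so that $\mathcal{G}(r_0)^{-1}$ factors out of the inner expectation cleanly.

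The key step is to evaluate the inner expectation $\e_{\PhiI}\big[\prod_{i\in\PhiI}(1+\gamma(r_0/r_i)^{\alpha})\big]$ by means of the probability generating functional (PGFL) of a PPP. Conditioned on $r_0$, the interferers form an inhomogeneous PPP with the intensity $\lambda^{*}(r)$ given in \eqref{eq:intensity-inhomogenous}, which equals $\lambdaBS L(\lambdaBS,\lambdaMT)$ for $r>r_0$ and vanishes otherwise; the factor $L(\lambdaBS,\lambdaMT)$ enters here as the independent thinning that retains only the active base stations. Applying the PGFL with $f(r)=1+\gamma(r_0/r)^{\alpha}$ yields
\begin{equation}
	\e_{\PhiI}\!\left[\prod_{i\in\PhiI}\!\left(1+\gamma\left(\frac{r_0}{r_i}\right)^{\alpha}\right)\right] = \exp\!\left(-\int_{r_0}^{\infty}\!\bigl(1-f(r)\bigr)\,\lambdaBS L(\lambdaBS,\lambdaMT)\,2\pi r\,dr\right),
\end{equation}
where $2\pi r\,dr$ is the planar area element under the radial symmetry of the model.

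The radial integral reduces to $2\pi\lambdaBS L(\lambdaBS,\lambdaMT)\gamma r_0^{\alpha}\int_{r_0}^{\infty} r^{1-\alpha}\,dr$, and this is where the assumption $\alpha>2$ does the real work: the exponent $1-\alpha<-1$ guarantees convergence at infinity, giving $\int_{r_0}^{\infty} r^{1-\alpha}\,dr = r_0^{2-\alpha}/(\alpha-2)$ and hence an inner expectation equal to $\exp\!\big(2\pi\lambdaBS L(\lambdaBS,\lambdaMT)\gamma r_0^{2}/(\alpha-2)\big)$. I expect this convergence requirement to be the only genuine obstacle: because each factor of the product exceeds one, the integrand $1-f(r)$ is negative, so the inner expectation is itself an exponential that \emph{grows} in $r_0^{2}$, and finiteness of $D$ is therefore not automatic but hinges on this exponent being controlled by the Gaussian-type tail of the PDF of $r_0$.

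Finally, I would take the outer expectation over $r_0$ against the density $f(r_0)=2\pi\lambdaBS r_0 e^{-\lambdaBS\pi r_0^{2}}$ from \eqref{eq:PDF-of-r0}. Combining the two exponents in $r_0^{2}$,
\begin{equation}
	-\lambdaBS\pi r_0^{2} + \frac{2\pi\lambdaBS L(\lambdaBS,\lambdaMT)\gamma r_0^{2}}{\alpha-2} = -\pi\lambdaBS r_0^{2}\left(1-\frac{2\gamma L(\lambdaBS,\lambdaMT)}{\alpha-2}\right),
\end{equation}
collects the integrand into exactly the form asserted in \eqref{eq:local-delay-general}, which completes the argument. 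The bracketed coefficient may be negative, in which case the integral diverges; this is precisely the finite-versus-infinite local delay dichotomy that the later sections exploit, and it is consistent with keeping the statement valid for all three coverage criteria through the single factor $\mathcal{G}(r_0)^{-1}$ that is left untouched in the outer integrand.
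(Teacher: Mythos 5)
Your proposal is correct and follows essentially the same route as the paper's own proof in Appendix~\ref{appendix:local-delay-general}: substitute the CCP from Lemma~\ref{lemma:coverage-probability-general-formula} into \eqref{eq:local-delay-def}, factor out $\mathcal{G}(r_0)^{-1}$, evaluate the inner expectation over $\PhiI$ via the PGFL with the thinned inhomogeneous intensity \eqref{eq:intensity-inhomogenous}, and decondition over $r_0$ using \eqref{eq:PDF-of-r0}. Your added remarks on where $\alpha>2$ is needed and on the sign of the combined exponent are accurate and consistent with Corollary~\ref{corollary:local-delay-SIR}.
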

	\begin{proof}
		See Appendix \ref{appendix:local-delay-general}.
	\end{proof}
	
	The local delay in \eqref{eq:local-delay-general} is given as a simple integral. If we consider the SIR-based coverage, we can obtain a closed-form expression under some conditions.
	\begin{corollary}\label{corollary:local-delay-SIR}
		Assume $1-2\gamma L(\lambdaBS,\lambdaMT)/(\alpha-2) >0$. Under the SIR-based coverage criterion, the local delay has the following closed-form expression:
		\begin{equation}
			D = \left(1-\frac{2\gamma L(\lambdaBS,\lambdaMT)}{\alpha-2}\right)^{-1}.
		\end{equation}
		If $1-2\gamma L(\lambdaBS,\lambdaMT)/(\alpha-2) \leq 0$, on the other hand, $D$ is infinite.
	\end{corollary}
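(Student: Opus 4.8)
The plan is to specialize the general local-delay formula of Theorem~\ref{thm:local-delay-general} to the SIR-based coverage criterion and then evaluate the resulting elementary integral. From the definition of $\mathcal{G}(r_0)$ in \eqref{eq:G-aux}, the SIR-based case corresponds to $\mathcal{G}(r_0) = 1$, so \eqref{eq:local-delay-general} collapses to
\begin{equation}
	D = 2\pi\lambdaBS\int_{0}^{\infty} r_0\, e^{-\pi\lambdaBS r_0^2\left(1-\frac{2\gamma L(\lambdaBS,\lambdaMT)}{\alpha-2}\right)}\, dr_0.
\end{equation}
To organize the argument, I would set $c \mydef \pi\lambdaBS\left(1-\frac{2\gamma L(\lambdaBS,\lambdaMT)}{\alpha-2}\right)$, so that $D = 2\pi\lambdaBS\int_0^\infty r_0 e^{-c r_0^2}\,dr_0$, and split into cases according to the sign of $c$, which is exactly the sign of $1-2\gamma L(\lambdaBS,\lambdaMT)/(\alpha-2)$.

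In the regime $c > 0$, the integral is a standard Gaussian-type moment: the substitution $u = r_0^2$ turns it into $\tfrac{1}{2}\int_0^\infty e^{-cu}\,du = \tfrac{1}{2c}$. Substituting back gives $D = 2\pi\lambdaBS \cdot \tfrac{1}{2c} = \pi\lambdaBS/c$, and since $c$ carries a matching factor $\pi\lambdaBS$, this prefactor cancels and leaves the claimed closed form $D = \left(1-2\gamma L(\lambdaBS,\lambdaMT)/(\alpha-2)\right)^{-1}$.

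For the complementary regime $1-2\gamma L(\lambdaBS,\lambdaMT)/(\alpha-2) \le 0$, i.e. $c \le 0$, I would argue divergence directly from the integrand on $[0,\infty)$. When $c = 0$ the integrand reduces to $r_0$, whose integral over the half-line is plainly infinite; when $c < 0$ the exponent $-c r_0^2$ is positive and unbounded, so $r_0 e^{-c r_0^2}$ grows without bound and the integral again diverges. In either subcase one concludes $D = +\infty$.

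The argument amounts to a single integral evaluation, so I do not anticipate a genuine obstacle. The only points requiring care are treating the boundary case $c=0$ separately from $c<0$ when establishing divergence, and checking that the prefactor $2\pi\lambdaBS$ cancels cleanly against the $1/(2c)$ coming from the Gaussian integral, so that the final expression is independent of $\lambdaBS$ as stated.
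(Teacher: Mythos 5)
Your proof is correct and follows essentially the same route as the paper: specialize Theorem~\ref{thm:local-delay-general} with $\mathcal{G}(r_0)=1$, evaluate the resulting Gaussian-type integral in the convergent regime, and argue divergence otherwise. You are in fact slightly more careful than the paper in treating the boundary case $1-2\gamma L(\lambdaBS,\lambdaMT)/(\alpha-2)=0$ separately (where the integrand is just $r_0$ rather than exponentially growing), but the substance is identical.
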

	\begin{proof}
		According to \eqref{eq:G-aux}, under the SIR-based criterion, we have $\mathcal{G}(r_0) = 1$. Thus, \eqref{eq:local-delay-general} becomes:
		\begin{align}\label{eq:local-delay-SIR}
			D = 2\pi\lambdaBS \int_{0}^{\infty} r_0e^{-\pi\lambdaBS r_0^2\left(1-\frac{2\gamma L(\lambdaBS,\lambdaMT)}{\alpha-2}\right)} dr_0. 
		\end{align}
		The integral is finite when $1-\frac{2\gamma L(\lambdaBS,\lambdaMT) }{\alpha-2}>0$. Otherwise, it is infinite since the exponential term of the integrand function grows without bound. Computing the integral completes the proof.
	\end{proof}
	
	\begin{corollary}\label{corollary:local-delay-SINR-and-SIR+ASNR}
		Under the SINR-based or the SIR+ASNR-based criterion, the local delay is infinite.
	\end{corollary}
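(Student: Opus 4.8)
The plan is to specialize the general local-delay integral of Theorem~\ref{thm:local-delay-general} to the two remaining choices of $\mathcal{G}(r_0)$ in \eqref{eq:G-aux} and to show that in each case the integrand fails to be integrable, so that $D = +\infty$. The whole argument rests on a single observation about the competition of exponentials: the weight $e^{-\pi\lambdaBS r_0^2(1 - 2\gamma L(\lambdaBS,\lambdaMT)/(\alpha-2))}$ decays or grows only like $e^{\pm c r_0^2}$, whereas the reciprocal of $\mathcal{G}(r_0)$ contributes either a stretched exponential in $r_0^\alpha$ with $\alpha>2$ or an outright singularity. In both cases the contribution of $1/\mathcal{G}(r_0)$ dominates.

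For the SINR-based criterion I would substitute $\mathcal{G}(r_0) = e^{-\gamma W K r_0^\alpha/P}$, so that $1/\mathcal{G}(r_0) = e^{\gamma W K r_0^\alpha/P}$ and the integrand of \eqref{eq:local-delay-general} becomes $r_0 \exp\!\big(\tfrac{\gamma W K}{P} r_0^\alpha - \pi\lambdaBS(1 - \tfrac{2\gamma L(\lambdaBS,\lambdaMT)}{\alpha-2}) r_0^2\big)$. Since $\alpha>2$, the $r_0^\alpha$ term dominates the $r_0^2$ term as $r_0\to\infty$, so the exponent tends to $+\infty$ and the integrand grows without bound. Hence the integral diverges and $D=\infty$. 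The only point requiring a line of care is to note that the sign of the quadratic coefficient is immaterial: a positive $r_0^\alpha$ contribution cannot be tamed by any $r_0^2$ term of either sign.

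For the SIR+ASNR-based criterion it is cleaner to return to the definition $D = \e_\Phi[1/\pcovcon]$ in \eqref{eq:local-delay-def} rather than to push the indicator through the integral, where $1/\mathcal{G}$ produces a formal division by zero. Here $\mathcal{G}(r_0) = \mathbbm{1}(r_0 \le (P/(KW\theta))^{1/\alpha})$, so by Lemma~\ref{lemma:coverage-probability-general-formula} the CCP satisfies $\pcovcon = 0$ whenever $r_0 > r^\ast := (P/(KW\theta))^{1/\alpha}$, because the detection condition $\asnr > \theta$ then fails deterministically and the packet is never decoded. The event $\{r_0 > r^\ast\}$ has strictly positive probability $e^{-\pi\lambdaBS (r^\ast)^2}$ under \eqref{eq:PDF-of-r0}, and on that event $1/\pcovcon = +\infty$. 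Taking the spatial expectation then yields $D = \e_\Phi[1/\pcovcon] = +\infty$.

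I expect no serious obstacle in either case; once Theorem~\ref{thm:local-delay-general} and Lemma~\ref{lemma:coverage-probability-general-formula} are in hand, the arguments are elementary. The only conceptual subtlety is bookkeeping: for the SINR case one must confirm the $r_0^\alpha$ growth survives for every admissible parameter configuration, which it does precisely because $\alpha>2$; and for the SIR+ASNR case one must handle the $\pcovcon=0$ region through the original expectation \eqref{eq:local-delay-def} rather than the formal integral \eqref{eq:local-delay-general}, since dividing by $\mathcal{G}(r_0)=0$ is merely shorthand for an infinite conditional delay.
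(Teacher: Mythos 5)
Your proposal is correct, and for the SINR-based case it is the same argument as the paper's: substitute $\mathcal{G}(r_0)=e^{-\gamma WKr_0^{\alpha}/P}$ into \eqref{eq:local-delay-general} and observe that, since $\alpha>2$, the positive $r_0^{\alpha}$ term in the exponent dominates the quadratic term of either sign, so the integrand is unbounded and the integral diverges. For the SIR+ASNR-based case the underlying fact you use is also the one the paper uses --- the region $r_0>r^{\ast}=(P/(KW\theta))^{1/\alpha}$ has strictly positive probability and contributes an infinite amount --- but your presentation is cleaner: the paper keeps the indicator $\mathbbm{1}(r_0\le r^{\ast})$ in the denominator of the integrand, splits the integral at $r^{\ast}$, and declares the tail integral infinite because one is dividing by zero there, which is a formal manipulation. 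You instead return to the definition $D=\e_{\Phi}[1/\pcovcon]$, note via Lemma~\ref{lemma:coverage-probability-general-formula} that $\pcovcon=0$ deterministically on the positive-probability event $\{r_0>r^{\ast}\}$, and conclude $D=+\infty$ as an expectation of an extended-real-valued random variable. This buys rigor at no cost and is arguably the way the paper's split-integral step should be read; the conclusions and the quantitative content (the probability $e^{-\pi\lambdaBS (r^{\ast})^{2}}$ of the bad event, which reappears as $P_e$ in Corollary~\ref{corollary:limit-of-F1-SIR+ASNR}) are identical.
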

	
	\begin{proof}
		According to \eqref{eq:G-aux}, under the SINR-based criterion, we have $\mathcal{G}(r_0) = e^{-\frac{\gamma WKr_0^\alpha}{P}}$. Thus, the local delay in \eqref{eq:local-delay-general} reduces to:
		\begin{equation}
			D = 2\pi\lambdaBS \int_{0}^{\infty} r_0  e^{\frac{\gamma WKr_0^\alpha}{P}-\pi\lambdaBS r_0^2\left(1-\frac{2\gamma L(\lambdaBS,\lambdaMT)}{\alpha-2}\right)} dr_0 \label{eq:local-delay-SINR}
		\end{equation}
		The integral expression is infinite due to the assumption $\alpha > 2$. When the SIR+ASNR-based coverage is considered, we have $\mathcal{G} = \mathbbm{1}\left({r_0 \leq \left(\frac{P}{KW\theta}\right)^{1/\alpha}}\right)$ and the local delay in \eqref{eq:local-delay-general} becomes:
		\begin{align} \label{eq:local-delay-SIR+ASNR}
			D &
			= 2\pi\lambdaBS \int_0^{\infty} \frac{ r_0e^{-\pi\lambdaBS r_0^2 \left(1-\frac{2\gamma L(\lambdaBS,\lambdaMT)}{\alpha-2}\right)}}{\mathbbm{1}\left({r_0 \leq \left(\frac{P}{KW\theta}\right)^{1/\alpha}}\right)}dr_0 \\
			=& 2\pi\lambdaBS \left[\int_0^{\left(\frac{P}{KW\theta}\right)^{\frac{1}{\alpha}}} \frac{ r_0e^{-\pi\lambdaBS r_0^2 \left(1-\frac{2\gamma L(\lambdaBS,\lambdaMT)}{\alpha-2}\right)}}{\mathbbm{1}\left({r_0 \leq \left(\frac{P}{KW\theta}\right)^{1/\alpha}}\right)}dr_0 + \int_{\left(\frac{P}{KW\theta}\right)^{\frac{1}{\alpha}}}^{\infty} \frac{r_0e^{-\pi\lambdaBS r_0^2 \left(1-\frac{2\gamma L(\lambdaBS,\lambdaMT)}{\alpha-2}\right)}}{\mathbbm{1}\left({r_0 \leq \left(\frac{P}{KW\theta}\right)^{1/\alpha}}\right)}dr_0 \right]. \nonumber
		\end{align}
		The last integral on the right hand side of \eqref{eq:local-delay-SIR+ASNR} is infinite since $\mathbbm{1}\left({r_0 \leq \left(\frac{P}{KW\theta}\right)^{1/\alpha}}\right)$ becomes zero and hence $D$ is infinite. This completes the proof.
	\end{proof}	
	
	\begin{remark}
		Let $\gamma^*$ be the SIR critical threshold defined as
		\begin{align}\label{eq:critical-treshold}
			\gamma^* = \frac{\alpha-2}{2L(\lambdaBS,\lambdaMT)}
		\end{align}
		From Corollary \ref{corollary:local-delay-SIR}, under the SIR-based criterion, the local delay is finite and positive if $\sir < \gamma^*$, and infinite otherwise. On the other hand, according to Corollary \ref{corollary:local-delay-SINR-and-SIR+ASNR}, the local delay is always infinite under the SINR-based or the SIR+ASNR-based criterion.
	\end{remark}

	\section{Analysis of the $F_1$ Distribution}
	\label{sec:F1-distribution}
	
	In this section, we derive the general formulation of the $F_1$ distribution, as well as its analytical expression for each considered coverage criterion. We also provide the asymptotic analysis of the distribution which corresponds to the packet loss probability.
	
	\subsection{Analytical Formulation of the $F_1$ Distribution}
	
	The general formulation of the $F_1$ distribution is given in the following theorem.
	\begin{theorem}\label{thm:F1-distribution-general}
		For any coverage criterion, the $F_1$ distribution is formulated as:
		\begin{align}\label{eq:F1-distribution-general}
			F_1(\tau) 
			&=  2\pi\lambdaBS \sum_{k=0}^{\tau} C_{k}^{\tau}(-1)^{k} \int_0^\infty \left(\mathcal{G}(r_0)\right)^k  e^{-\pi\lambdaBS r_0^2 \mathcal{F}\left(k,\alpha,\gamma\right) }  r_0dr_0.
		\end{align}
	\end{theorem}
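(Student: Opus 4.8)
The plan is to reduce $F_1(\tau)$ to a finite, signed combination of the integer moments of the conditional coverage probability $\pcovcon$, and then to evaluate each moment by means of the probability generating functional (PGFL) of the interfering point process. First I would observe that, conditioned on $\Phi$, the delay $\Delta_\Phi$ is geometric with success probability $\pcovcon$, so its complementary CDF is $\pr_h[\Delta_\Phi > \tau \mid \Phi] = (1-\pcovcon)^\tau$. Substituting this into the definition \eqref{eq:Delay-distribution-F1} and expanding the binomial power gives
\begin{equation}
	F_1(\tau) = \e_\Phi\!\left[(1-\pcovcon)^\tau\right] = \sum_{k=0}^\tau C_k^\tau (-1)^k \, \e_\Phi\!\left[(\pcovcon)^k\right],
\end{equation}
where the interchange of the finite sum and the expectation is immediate by linearity. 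This isolates the problem to computing the $k$-th moment $\e_\Phi[(\pcovcon)^k]$ for each $k \in \{0,\dots,\tau\}$.

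For the moment, I would insert the product form of $\pcovcon$ from Lemma \ref{lemma:coverage-probability-general-formula}. Since $\mathcal{G}(r_0)$ depends only on the serving distance $r_0$, I would condition on $r_0$ (with density $f(r_0)$), pull $(\mathcal{G}(r_0))^k$ out of the interference average, and obtain
\begin{equation}
	\e_\Phi[(\pcovcon)^k] = \int_0^\infty (\mathcal{G}(r_0))^k \, \e_{\PhiI}\!\left[\prod_{i\in\PhiI}\left(1+\gamma(r_0/r_i)^\alpha\right)^{-k} \Bigm| r_0\right] f(r_0)\,dr_0.
\end{equation}
The inner expectation is a PGFL over the inhomogeneous PPP $\PhiI$, whose density is $\lambdaBS L(\lambdaBS,\lambdaMT)$ on $r>r_0$ by \eqref{eq:intensity-inhomogenous}. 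Applying the PGFL and switching to polar coordinates turns it into $\exp\!\left(-2\pi\lambdaBS L(\lambdaBS,\lambdaMT)\, r_0^2 J_k\right)$, where, after the scaling $u = r/r_0$, one is left with the single radial integral $J_k = \int_1^\infty \left(1-(1+\gamma u^{-\alpha})^{-k}\right) u\,du$.

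The hard part will be evaluating $J_k$ in closed form and recognizing the hypergeometric structure. I would first note that $J_k$ converges precisely because $\alpha>2$ (the integrand decays like $k\gamma u^{1-\alpha}$ as $u\to\infty$), then substitute $v=u^{-\alpha}$ to reach $J_k = \tfrac1\alpha\int_0^1 \left(1-(1+\gamma v)^{-k}\right) v^{-2/\alpha-1}\,dv$ and integrate by parts (the boundary term at $v=0$ vanishes since $2/\alpha<1$). The resulting Euler-type integral evaluates to a Gauss hypergeometric function, and after a contiguous-relation rearrangement one obtains the compact identity $2J_k = {}_2F_1(-2/\alpha,k;1-2/\alpha;-\gamma)-1$, which I would cross-check by matching the coefficients of the power series in $\gamma$ on both sides. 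Finally, combining the PGFL exponent with the Gaussian factor $e^{-\lambdaBS\pi r_0^2}$ contained in $f(r_0)$ collapses the total exponent to $-\pi\lambdaBS r_0^2\bigl(1+L(\lambdaBS,\lambdaMT)({}_2F_1(-2/\alpha,k;1-2/\alpha;-\gamma)-1)\bigr) = -\pi\lambdaBS r_0^2\,\mathcal{F}(k,\alpha,\gamma)$, while the prefactor $2\pi\lambdaBS r_0$ of $f(r_0)$ is exposed; assembling this with the binomial sum of the first paragraph yields \eqref{eq:F1-distribution-general}.
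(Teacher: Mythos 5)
Your proposal is correct and follows essentially the same route as the paper's proof: binomial expansion of $(1-\pcovcon)^\tau$ into moments $\e_\Phi[(\pcovcon)^k]$, substitution of the product form from Lemma \ref{lemma:coverage-probability-general-formula}, the PGFL over the inhomogeneous process $\PhiI$, and deconditioning on $r_0$. The only difference is that you sketch a derivation of the radial integral $2J_k = {}_2F_1(-2/\alpha,k;1-2/\alpha;-\gamma)-1$, which the paper simply invokes as a known identity.
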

	
	\begin{proof}
		See Appendix \ref{appendix:proof-of-F1-distribution-general}.
	\end{proof}
	
	Based on Theorem \ref{thm:F1-distribution-general}, the analytical expressions of the $F_1$ distribution for the three considered coverage criteria are given by the following corollary.
	
	\begin{corollary}\label{corollary:F1-distribution-exact}
		Let $F_1^{\textup{sir}}(\tau)$, $F_1^{\textup{sinr}}(\tau)$, and $F_1^{\textup{sir+asnr}}(\tau)$ denote the $F_1$ distribution under the assumption of the SIR, SINR, and SIR+ASNR-based coverage criterion, respectively.
		The analytical expression of $F_1(\tau)$ for each criterion can be computed as follows:
		\begin{align}
			F_1^{\textup{sir}}(\tau) 
			&=  \sum_{k=0}^{\tau} \frac{C_{k}^{\tau}(-1)^{k}}{\mathcal{F}\left(k,\alpha,\gamma\right)} \label{eq:F1-distribution-SIR}\\
			F_1^{\textup{sinr}}(\tau) 
			&= \sum_{k=0}^{\tau} C_{k}^{\tau}(-1)^{k} 2\pi\lambdaBS \int_0^\infty e^{-\pi\lambdaBS r_0^2 \mathcal{F}\left(k,\alpha,\gamma\right)} e^{-k\frac{\eta WKr_0^\alpha}{P}}r_0dr_0 \label{eq:F1-distribution-SINR}\\
			F_1^{\textup{sir+asnr}}(\tau) 
			&= 1 + \sum_{k=1}^{\tau} \frac{C_{k}^{\tau}(-1)^{k}}{\mathcal{F}\left(k,\alpha,\gamma\right)}  \left[{1-e^{-\pi\lambdaBS \left(\frac{P}{KW\theta}\right)^{\frac{2}{\alpha}} \mathcal{F}\left(k,\alpha,\gamma\right) }}\right]. \label{eq:F1-distribution-SIR+ASNR}
		\end{align}
	\end{corollary}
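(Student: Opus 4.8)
The plan is to specialize the general formula \eqref{eq:F1-distribution-general} of Theorem \ref{thm:F1-distribution-general} to each of the three coverage criteria by inserting the corresponding expression of $\mathcal{G}(r_0)$ from \eqref{eq:G-aux} and then evaluating the resulting radial integral. In every case the integral over $r_0$ is (after the substitution $u = r_0^2$, for which $r_0\,dr_0 = du/2$) of Gaussian type, so the only real work is tracking the effect of the factor $(\mathcal{G}(r_0))^k$ on the integrand.

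For the SIR-based criterion I would set $\mathcal{G}(r_0) = 1$, so that $(\mathcal{G}(r_0))^k = 1$ for every $k$. The integral reduces to $2\pi\lambdaBS\int_0^\infty e^{-\pi\lambdaBS r_0^2 \mathcal{F}(k,\alpha,\gamma)} r_0\,dr_0$, which under $u = r_0^2$ becomes $\pi\lambdaBS\int_0^\infty e^{-\pi\lambdaBS u \mathcal{F}(k,\alpha,\gamma)}\,du = \mathcal{F}(k,\alpha,\gamma)^{-1}$. Summing over $k$ yields \eqref{eq:F1-distribution-SIR} directly. For the SINR-based criterion I would substitute $\mathcal{G}(r_0) = e^{-\gamma WKr_0^\alpha/P}$, so $(\mathcal{G}(r_0))^k = e^{-k\gamma WKr_0^\alpha/P}$. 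Because the exponent now contains $r_0^\alpha$ with $\alpha > 2$, the radial integral no longer collapses to an elementary antiderivative, so \eqref{eq:F1-distribution-SINR} is simply the general formula with this $\mathcal{G}$ inserted and is left in integral form.

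For the SIR+ASNR-based criterion I would write $\mathcal{G}(r_0) = \mathbbm{1}\left(r_0 \leq \left(\frac{P}{KW\theta}\right)^{1/\alpha}\right)$ and split the sum according to whether $k = 0$ or $k \geq 1$. For $k \geq 1$ the indicator satisfies $(\mathbbm{1}(\cdot))^k = \mathbbm{1}(\cdot)$, so the integral becomes a truncated Gaussian integral over $\bigl[0,\left(\frac{P}{KW\theta}\right)^{1/\alpha}\bigr]$, evaluating to $\mathcal{F}(k,\alpha,\gamma)^{-1}\bigl[1 - e^{-\pi\lambdaBS \left(\frac{P}{KW\theta}\right)^{2/\alpha}\mathcal{F}(k,\alpha,\gamma)}\bigr]$, which reproduces the summand in \eqref{eq:F1-distribution-SIR+ASNR}.

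The one point that requires care—and what I would flag as the main subtlety rather than a genuine obstacle—is the $k=0$ term in the SIR+ASNR case. By the convention $x^0 = 1$, the zeroth power of the indicator equals $1$ everywhere, not the indicator itself, so this term cannot be lumped into the truncated integral. Instead, since ${}_2F_1(-2/\alpha, 0; 1-2/\alpha; -\gamma) = 1$ we have $\mathcal{F}(0,\alpha,\gamma) = 1$, and the $k=0$ contribution is the full (untruncated) Gaussian integral $C_0^\tau \cdot \mathcal{F}(0,\alpha,\gamma)^{-1} = 1$, which is exactly the isolated leading constant in \eqref{eq:F1-distribution-SIR+ASNR}. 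Collecting the $k=0$ and $k \geq 1$ contributions then completes the proof.
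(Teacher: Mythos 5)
Your proposal is correct and follows essentially the same route as the paper: substitute $\mathcal{G}(r_0)$ from \eqref{eq:G-aux} into Theorem \ref{thm:F1-distribution-general}, evaluate the resulting Gaussian-type radial integrals where possible, and for the SIR+ASNR case isolate the $k=0$ term (where the zeroth power of the indicator is identically $1$) before truncating the integral for $k\geq 1$ — which is exactly step $(a)$ of the paper's Appendix E. Your observation that $\mathcal{F}(0,\alpha,\gamma)=1$ makes the $k=0$ contribution equal to the leading constant $1$ is a slightly more explicit justification than the paper gives, but it is the same argument.
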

	\begin{proof}
		$F_1^{\textup{sir}}(\tau)$ and $F_1^{\textup{sinr}}(\tau)$ follow by inserting $\mathcal{G}(r_0)$ in \eqref{eq:G-aux} into \eqref{eq:F1-distribution-general}. 
		As for $F_1^{\textup{sir+asnr}}(\tau)$, the proof can be found in Appendix \ref{appendix:proof-of-F1-distribution-exact}.
	\end{proof}
	
	\begin{remark}
		From Corollary \ref{corollary:F1-distribution-exact}, we can see that, under the SIR-based or SIR+ASNR-based criterion, $F_1(\tau)$ is given in a closed-form expression. Under the SINR-based criterion, on the other hand, $F_1(\tau) $ is not formulated, in general, in a closed-form expression.
		Since $\mathcal{F}\left(k,\alpha,\gamma\right) \geq 1$ and \eqref{eq:F1-distribution-SIR}, \eqref{eq:F1-distribution-SINR}, and \eqref{eq:F1-distribution-SIR+ASNR} are expressed as a finite sum, the $F_1$ distribution is always finite for all coverage criteria, even though the corresponding local delay may be infinite.
	\end{remark}
	
	\subsection{Asymptotic Analysis of the $F_1$ Distribution}
	
	In this section, we analyze the asymptotic behavior of the $F_1$ distribution as $\tau \to \infty$. In other words, we are interested in computing the following limit: $P_e = \lim_{\tau\to\infty} F_1(\tau)$.
	Recall that $F_1(\tau)$ can be regarded as the delay violation probability within the deadline $\tau$. Thus, in wireless communications, $P_e$ corresponds to the \textit{packet loss probability}, i.e., the probability that a user never receives its packet within an infinite waiting time. We first introduce the following results.
	
	\begin{lemma}\label{lemma:positivity-of-pcovcon}
		Under the SIR-based coverage criterion, we have $\pcovcon> 0$.
	\end{lemma}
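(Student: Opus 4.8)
The plan is to work directly from the product representation of $\pcovcon$ supplied by Lemma~\ref{lemma:coverage-probability-general-formula}. Under the SIR-based criterion we have $\mathcal{G}(r_0)=1$, so
\begin{equation*}
\pcovcon = \prod_{i\in\PhiI}\left(1+\gamma\left(\frac{r_0}{r_i}\right)^\alpha\right)^{-1}.
\end{equation*}
Every factor lies in $(0,1)$ because $\gamma,r_0,r_i>0$, so the only question is whether the infinite product collapses to zero. I would reduce positivity to a summability statement by taking logarithms: $\pcovcon>0$ holds if and only if $-\log\pcovcon = \sum_{i\in\PhiI}\log\!\left(1+\gamma(r_0/r_i)^\alpha\right)<\infty$.

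To control this sum I would invoke the elementary bound $\log(1+x)\le x$, valid for $x\ge 0$, which yields $-\log\pcovcon \le \gamma r_0^\alpha \sum_{i\in\PhiI} r_i^{-\alpha}$. It therefore suffices to show that the interference-type sum $\sum_{i\in\PhiI} r_i^{-\alpha}$ is finite almost surely, conditioned on the realization of $\Phi$ (equivalently, on $r_0$). The key step is a Campbell-theorem computation of its conditional mean over the inhomogeneous PPP $\PhiI$, whose intensity from \eqref{eq:intensity-inhomogenous} equals $\lambdaBS L(\lambdaBS,\lambdaMT)$ on $\{r>r_0\}$:
\begin{equation*}
\e\!\left[\sum_{i\in\PhiI} r_i^{-\alpha}\,\middle|\,r_0\right] = \lambdaBS L(\lambdaBS,\lambdaMT)\int_{r_0}^{\infty} r^{-\alpha}\,2\pi r\,dr = 2\pi\lambdaBS L(\lambdaBS,\lambdaMT)\,\frac{r_0^{2-\alpha}}{\alpha-2}.
\end{equation*}
The two features that make this finite are precisely the exclusion of the origin's neighborhood (the lower limit $r_0>0$, which removes the non-integrable singularity of $r^{1-\alpha}$ at zero) and the standing assumption $\alpha>2$ (which makes the tail integrable). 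Since this conditional mean is finite for every $r_0>0$, the nonnegative random variable $\sum_{i\in\PhiI} r_i^{-\alpha}$ is finite almost surely, whence $-\log\pcovcon<\infty$ and $\pcovcon>0$ almost surely.

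The one genuinely delicate point I expect is the passage from \emph{finite expectation} to \emph{strict positivity of the product}. The cleanest way to make this airtight is to carry the explicit bound
\begin{equation*}
\pcovcon \ge \exp\!\left(-\gamma r_0^\alpha \sum_{i\in\PhiI} r_i^{-\alpha}\right)
\end{equation*}
through to the end: the exponent is almost surely finite by the Campbell argument above, so the right-hand side is almost surely strictly positive, giving the claim. I would also note explicitly that the statement is understood in the almost-sure sense over realizations of $\Phi$, since $\pcovcon$ is a functional of the network geometry.
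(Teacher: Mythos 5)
Your proof is correct, but it takes a genuinely different route from the paper. The paper argues pathwise and deterministically: it first proves (at some length) the classical criterion that an infinite product of increasing terms $0<a_1<a_2<\dots<1$ is positive iff $\sum_i(1-a_i)<\infty$, orders the interferers by distance, and then tries to establish $\sum_n(1-a_n)<\infty$ with a ratio test on $b_n=\gamma r_0^{\alpha}/\bigl(r_n^{\alpha}+\gamma r_0^{\alpha}\bigr)$, claiming $\lim_n|b_{n+1}/b_n|<1$ because $r_n<r_{n+1}$. You instead bound $-\log\pcovcon\le\gamma r_0^{\alpha}\sum_{i\in\PhiI}r_i^{-\alpha}$ via $\log(1+x)\le x$ and show the interference functional $\sum_{i\in\PhiI}r_i^{-\alpha}$ is almost surely finite because its conditional mean, computed by Campbell's theorem over the intensity $\lambdaBS L(\lambdaBS,\lambdaMT)$ on $\{r>r_0\}$, equals $2\pi\lambdaBS L(\lambdaBS,\lambdaMT)r_0^{2-\alpha}/(\alpha-2)<\infty$ for $\alpha>2$. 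Your probabilistic argument is the standard one for finiteness of shot-noise interference and is, in fact, more robust: the paper's ratio-test step is fragile, since for ordered distances of a planar PPP one has $r_{n+1}/r_n\to 1$ almost surely, so $\lim_n b_{n+1}/b_n=1$ and the ratio test is inconclusive as stated; the conclusion still holds (because $r_n^2\sim n/(\pi\lambda)$ makes $\sum_n r_n^{-\alpha}$ converge for $\alpha>2$), but that requires the growth rate of $r_n$, which is exactly the information your Campbell computation supplies in averaged form. What the paper's approach buys is an elementary, self-contained statement about infinite products with no appeal to Campbell's theorem; what yours buys is a two-line quantitative lower bound $\pcovcon\ge\exp\bigl(-\gamma r_0^{\alpha}\sum_{i\in\PhiI}r_i^{-\alpha}\bigr)$ and a sound justification of the summability step. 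Your closing caveat is also well placed: the statement holds in the almost-sure sense over $\Phi$ (you need $r_0>0$ and the a.s. finiteness of the shot noise), which is the correct reading of the lemma.
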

	
	\begin{proof}
		See Appendix \ref{appendix:positivity-of-pcovcon}.
	\end{proof}
	
	\begin{proposition}\label{proposition:limit-of-F1-general-case}
		For any coverage criterion, the packet loss probability is:
		\begin{equation}\label{eq:limit-of-F1-general-case}
			P_e = \pr\left[\pcovcon = 0\right] = \pr[\mathcal{G}(r_0) = 0].
		\end{equation}
	\end{proposition}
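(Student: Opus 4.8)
The plan is to rewrite $F_1(\tau)$ in a form that exposes its behaviour as $\tau \to \infty$ pointwise in $\Phi$, and then to justify passing the limit inside the spatial expectation. First I would use the geometric PMF in \eqref{eq:pdf-mean-time-slots} to evaluate the inner CCDF in closed form: summing $\pr_h[\Delta_\Phi = k] = \pcovcon(1-\pcovcon)^k$ over all $k > \tau$ gives $\pr_h[\Delta_\Phi > \tau \mid \Phi] = (1-\pcovcon)^{\tau+1}$, so that the definition \eqref{eq:Delay-distribution-F1} becomes $F_1(\tau) = \e_\Phi\!\left[(1-\pcovcon)^{\tau+1}\right]$. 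The $\tau$-dependence is now isolated inside a single power of $1-\pcovcon$.

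Next I would identify the pointwise limit of the integrand. Since $\pcovcon \in [0,1]$ for every realization of $\Phi$, the quantity $(1-\pcovcon)^{\tau+1}$ converges, as $\tau \to \infty$, to $1$ when $\pcovcon = 0$ and to $0$ when $\pcovcon > 0$; that is, $(1-\pcovcon)^{\tau+1} \to \mathbbm{1}(\pcovcon = 0)$ for almost every $\Phi$. Because $0 \le (1-\pcovcon)^{\tau+1} \le 1$ uniformly in $\tau$, and the constant function $1$ is integrable with respect to the law of $\Phi$, the dominated convergence theorem applies and yields $P_e = \lim_{\tau\to\infty}\e_\Phi[(1-\pcovcon)^{\tau+1}] = \e_\Phi[\mathbbm{1}(\pcovcon = 0)] = \pr[\pcovcon = 0]$, which is the first claimed equality.

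Finally I would establish $\pr[\pcovcon = 0] = \pr[\mathcal{G}(r_0) = 0]$ by analysing the factorization in Lemma \ref{lemma:coverage-probability-general-formula}, namely $\pcovcon = \mathcal{G}(r_0)\prod_{i\in\PhiI}\left(1+\gamma(r_0/r_i)^\alpha\right)^{-1}$. Each interference factor lies in $(0,1]$, so $\pcovcon$ can vanish only if $\mathcal{G}(r_0) = 0$ or the infinite product vanishes. The crux is ruling out the latter: setting $\mathcal{G}(r_0) \equiv 1$ reduces the product to exactly the SIR-based CCP, and Lemma \ref{lemma:positivity-of-pcovcon} guarantees that this is strictly positive. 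Hence the interference product is almost surely positive, so the events $\{\pcovcon = 0\}$ and $\{\mathcal{G}(r_0) = 0\}$ coincide up to a null set, giving the second equality and completing the proof.

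The main obstacle I anticipate is the interchange of limit and expectation: one must confirm the uniform bound $(1-\pcovcon)^{\tau+1}\le 1$ that legitimizes dominated convergence, together with the almost-sure positivity of the interference product supplied by Lemma \ref{lemma:positivity-of-pcovcon}. These two facts together ensure that, in every coverage criterion, it is precisely the deterministic factor $\mathcal{G}(r_0)$ that controls whether the CCP is zero, which is what reduces $\pr[\pcovcon = 0]$ to the geometric event $\pr[\mathcal{G}(r_0) = 0]$.
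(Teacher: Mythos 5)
Your proof is correct and follows essentially the same route as the paper: identify the pointwise limit $(1-\pcovcon)^\tau \to \mathbbm{1}(\pcovcon = 0)$, pass the limit through the spatial expectation, and then use the factorization of Lemma \ref{lemma:coverage-probability-general-formula} together with the positivity result of Lemma \ref{lemma:positivity-of-pcovcon} to conclude that the event $\{\pcovcon = 0\}$ coincides with $\{\mathcal{G}(r_0)=0\}$. The only difference is in how the interchange of limit and expectation is justified---you invoke dominated convergence, while the paper writes $\e[Z]=\int_0^1 \pr[Z>z]\,dz$ and splits into the cases $\pcovcon=0$ and $\pcovcon>0$; your justification is, if anything, the cleaner of the two, and the $\tau$ versus $\tau+1$ exponent discrepancy is immaterial for the limit.
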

	\begin{proof}
		See Appendix \ref{appendix:limit-of-F1-general-case}.
	\end{proof}
	
	Proposition \ref{proposition:limit-of-F1-general-case} implies that the expected packet loss probability is equivalent to the probability of the CCP being zero, which is equivalent to the probability that $\mathcal{G}(r_0)$ is zero.
	
	\begin{corollary}\label{corollary:limit-of-F1-SIR-and-SINR}
		Under the SIR-based or SINR-based coverage criterion, we have $P_e = 0$.
	\end{corollary}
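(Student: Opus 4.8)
The plan is to invoke Proposition \ref{proposition:limit-of-F1-general-case}, which has already done the heavy lifting by reducing the packet loss probability to a single geometric event, namely $P_e = \pr[\mathcal{G}(r_0) = 0]$. Once this reduction is in hand, the corollary follows simply by inspecting the explicit form of $\mathcal{G}(r_0)$ given in \eqref{eq:G-aux} for each of the two criteria and showing that the vanishing event carries no probability mass.

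First I would dispose of the SIR-based criterion. Here \eqref{eq:G-aux} gives $\mathcal{G}(r_0) = 1$ identically, so the event $\{\mathcal{G}(r_0) = 0\}$ is empty and $P_e = \pr[1 = 0] = 0$ at once. Next I would treat the SINR-based criterion, where $\mathcal{G}(r_0) = e^{-\gamma W K r_0^\alpha / P}$. Since the exponential is strictly positive for every finite argument, $\mathcal{G}(r_0) > 0$ whenever $r_0 < \infty$, so that $\{\mathcal{G}(r_0) = 0\} \subseteq \{r_0 = \infty\}$. Because the serving distance $r_0$ admits the proper probability density $f(r_0)$ of \eqref{eq:PDF-of-r0}, it is finite almost surely, whence $\{r_0 = \infty\}$ is a null event and again $P_e = 0$.

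The conclusion is essentially immediate once Proposition \ref{proposition:limit-of-F1-general-case} is available, so there is no substantive obstacle. The only point demanding a word of care is the SINR case: one must explicitly note that $r_0$ is finite with probability one so that the strictly positive exponential cannot vanish on a set of positive measure. This finiteness is guaranteed by the integrability of the Rayleigh-type nearest-BS density $f(r_0) = 2\pi\lambdaBS r_0 e^{-\lambdaBS\pi r_0^2}$. I would remark, to sharpen the contrast with the next corollary, that the argument breaks down precisely for the SIR+ASNR criterion, where $\mathcal{G}(r_0)$ is an indicator that genuinely vanishes on the event $\{r_0 > (P/(KW\theta))^{1/\alpha}\}$, an event of strictly positive probability under the same density.
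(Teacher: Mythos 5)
Your proposal is correct and follows essentially the same route as the paper: invoke Proposition \ref{proposition:limit-of-F1-general-case} to reduce $P_e$ to $\pr[\mathcal{G}(r_0)=0]$, then observe that $\mathcal{G}\equiv 1$ in the SIR case and that the exponential in the SINR case can only vanish as $r_0\to\infty$, an event of zero probability under the density \eqref{eq:PDF-of-r0}. If anything, your phrasing of the SINR step (almost-sure finiteness of $r_0$) is slightly cleaner than the paper's limit argument, but the substance is identical.
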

	
	\begin{proof}
		According to \eqref{eq:G-aux}, under the SIR-based coverage criterion, $\mathcal{G}(r_0) = 1$. In this case, according to \eqref{eq:limit-of-F1-general-case}, the packet loss probability is $P_e = \pr[\mathcal{G}(r_0) = 0] = 0$. This is in agreement with Lemma \ref{lemma:positivity-of-pcovcon} as well. Under the SINR-based criterion, we have $\mathcal{G}(r_0) = e^{-\frac{\gamma WKr_0^\alpha}{P}}$. Thus:
		\begin{align}
			P_e 
			&= \pr\left[e^{-\frac{\gamma WK{r_0}^\alpha}{P}}=0\right] = \lim_{r\to\infty}\pr[r_0=r] \mystepA 0.
		\end{align}
		where $(a)$ follows from \eqref{eq:PDF-of-r0} and the fact that $\lim_{r\to\infty} r\exp\left(-\pi\lambdaBS r^2\right) = 0$.
	\end{proof}
	
	\begin{corollary}\label{corollary:limit-of-F1-SIR+ASNR}
		Under the SIR+ASNR-based coverage criterion, we have:
		\begin{align}
			P_e  &= e^{-\pi\lambdaBS\left(\frac{P}{KW\theta}\right)^{2/\alpha}}.
		\end{align}
	\end{corollary}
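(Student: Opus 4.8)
The plan is to invoke Proposition \ref{proposition:limit-of-F1-general-case}, which already reduces the packet loss probability to a purely geometric quantity, namely $P_e = \pr[\mathcal{G}(r_0) = 0]$. This collapses the asymptotic delay analysis into a single tail-probability computation for the serving distance $r_0$, so no limiting argument in $\tau$ remains to be carried out directly; that work is absorbed into the proposition.

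First I would substitute the SIR+ASNR form of $\mathcal{G}(r_0)$ from \eqref{eq:G-aux}, namely $\mathcal{G}(r_0) = \mathbbm{1}\left(r_0 \leq \left(\frac{P}{KW\theta}\right)^{1/\alpha}\right)$. The key observation is that this indicator vanishes if and only if its argument is false, so the event $\{\mathcal{G}(r_0) = 0\}$ coincides exactly with the event $\{r_0 > (P/(KW\theta))^{1/\alpha}\}$. Hence
\begin{equation}
P_e = \pr\left[r_0 > \left(\frac{P}{KW\theta}\right)^{1/\alpha}\right].
\end{equation}

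Next I would evaluate this tail probability using the nearest-BS distance density $f(r_0) = 2\pi\lambdaBS r_0 e^{-\lambdaBS\pi r_0^2}$ from \eqref{eq:PDF-of-r0}. Integrating from the threshold to infinity, and recognizing the standard Rayleigh-type primitive $\int_a^\infty 2\pi\lambdaBS r_0 e^{-\lambdaBS\pi r_0^2}\,dr_0 = e^{-\lambdaBS\pi a^2}$, I would set $a = (P/(KW\theta))^{1/\alpha}$ so that $a^2 = (P/(KW\theta))^{2/\alpha}$, which yields the claimed expression $P_e = e^{-\pi\lambdaBS(P/(KW\theta))^{2/\alpha}}$.

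There is no substantial obstacle here: the argument is a one-line reduction followed by an elementary integral. If anything, the only point requiring care is the logical equivalence between the vanishing of the indicator and the strict inequality $r_0 > (P/(KW\theta))^{1/\alpha}$, and the attendant (immaterial) treatment of the boundary point $r_0 = (P/(KW\theta))^{1/\alpha}$, which has probability zero under the continuous density \eqref{eq:PDF-of-r0} and therefore does not affect the result.
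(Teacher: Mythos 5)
Your proposal is correct and follows essentially the same route as the paper's proof: invoke Proposition \ref{proposition:limit-of-F1-general-case}, substitute the indicator form of $\mathcal{G}(r_0)$ from \eqref{eq:G-aux}, reduce to the tail probability $\pr[r_0 > (P/(KW\theta))^{1/\alpha}]$, and evaluate it via the nearest-BS distance distribution. The only difference is that you carry out the CCDF integral explicitly, whereas the paper cites it from the literature.
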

	\begin{proof}
		According to \eqref{eq:G-aux}, under the SIR+ASNR criterion, we have $\mathcal{G}(r_0) = \mathbbm{1}\left(r_0 \leq \left(\frac{P}{KW\theta}\right)^{1/\alpha}\right)$. Thus, from \eqref{eq:limit-of-F1-general-case}, we have $P_e = \pr\left[\mathbbm{1}\left(r_0 \leq \left(\frac{P}{KW\theta}\right)^{1/\alpha}\right)=0\right] = \pr\left[r_0 > \left(\frac{P}{KW\theta}\right)^{1/\alpha}\right]$.
		The proof follows by computing the CCDF of $r_0$ (see, e.g., \cite{yu2013downlink}). 
	\end{proof}
	
	\begin{remark}
		From Corollaries \ref{corollary:limit-of-F1-SIR-and-SINR} and \ref{corollary:limit-of-F1-SIR+ASNR}, we can evince the following:
		\begin{itemize}
			\item Under the SIR-based or SINR-based criterion, as $\tau \to \infty$, all the users are guaranteed to be served eventually, even though the local delay in these cases can be infinite.
			
			\item On the other hand, since $ e^{-\pi\lambdaBS\left(\frac{P}{KW\theta}\right)^{2/\alpha}} > 0$ if $\theta > 0$, the packet loss probability under the SIR+ASNR-based criterion is always non-zero. This implies that there is a certain portion of the users who, on average, will not get their packet no matter how long they wait.	
		\end{itemize}
	\end{remark}
	
	\begin{figure}[!t]
		\begin{center}
			\includegraphics[width=0.75\linewidth]{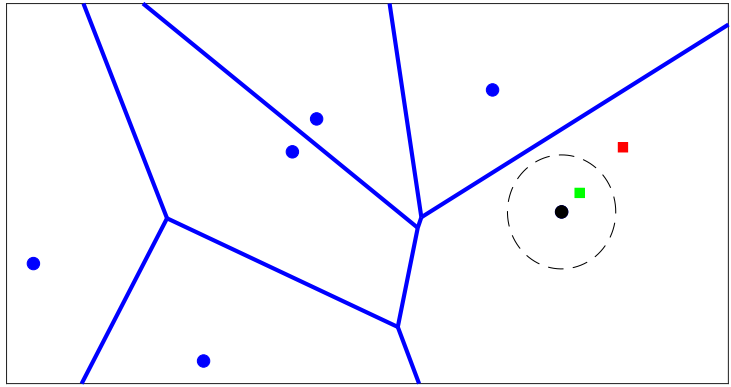}
			\caption{Illustration on the non-zero packet loss probability under the SIR+ASNR-based coverage criterion. Each dot and square represents a BS and an MT, respectively. Both the red and green MTs are associated to the black BS. The dashed circle represents the points with distance $r^* = \left(\frac{P}{KW\theta}\right)^{1/\alpha}$ from the green BS. The red MT will never receive its packet due to failure in fulfilling the coverage condition.}
			\label{fig:network-with-SNR}
		\end{center} \vspace{-1cm}
	\end{figure}	
	
	The non-zero probability of packet loss under the SIR+ASNR-based coverage criterion can be explained as follows. A necessary condition for a typical user $\mt_0$ to be served by its tagged BS, $\bs_0$, in a given time slot, is that the ASNR is greater than the detection threshold, i.e., $\frac{P}{WKr_0^{\alpha}} > \theta$. For given locations of $\mt_0$ and $\bs_0$, the quantity of $\frac{P}{WKr_0^{\alpha}}$ is fixed.  Therefore, given a pair $\bs_0$ and $\mt_0$, the $\mt_0$ will never receive its packet (i.e., it experiences a packet loss), if $r_0 > r^*$ where $r^* = \left(\frac{P}{KW\theta}\right)^{1/\alpha} $ (see Fig. \ref{fig:network-with-SNR} for illustration). In contrast, the condition for which the user is served according to the SINR-based coverage criterion is $\frac{P|h_0|/Kr_0^{\alpha}}{\sum_{i\in\PhiI} P|h_i|/Kr_i^{\alpha} + W} > \gamma$. Due to the random fading coefficients $|h_0|$ and $|h_i|$, there is a non-zero probability for this condition to be fulfilled, which results in a zero probability of packet loss for a long enough waiting time.

	\section{Analysis of the $F_2$ and $F_3$ Distributions}
	\label{sec:F2-F3-distribution}
	
	\subsection{Analytical Formulation of the $F_2$ and $F_3$ Distributions}
	
	To obtain analytical expressions for the $F_2$ and $F_3$ distributions, we first state the following preliminary result.
	
	\begin{lemma}\label{lemma:characteristic-function}
		Define the random variable $Z =\log(\pcovcon)$. For any coverage criterion, the characteristic function of $Z$, denoted by $\varphi_Z(t)$, is given as follows:
		\begin{equation}\label{eq:characteristic-function-general}
			\varphi_Z(t)  
			= 2\pi\lambdaBS\int_0^{\infty} \left(\mathcal{G}(r_0)\right)^{it}e^{-\pi\lambdaBS {r_0}^2 \mathcal{F}\left(it,\alpha,\gamma\right) } r_0 dr_0.
		\end{equation}
	\end{lemma}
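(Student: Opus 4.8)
The plan is to compute the characteristic function straight from its definition, using the product form of the CCP. Since $Z=\log(\pcovcon)$ we have $\varphi_Z(t)=\e_\Phi[e^{itZ}]=\e_\Phi\big[(\pcovcon)^{it}\big]$, and substituting the expression from Lemma \ref{lemma:coverage-probability-general-formula} gives
\[
(\pcovcon)^{it}=(\mathcal{G}(r_0))^{it}\prod_{i\in\PhiI}\left(1+\gamma\left(\frac{r_0}{r_i}\right)^{\alpha}\right)^{-it}.
\]
First I would condition on the serving distance $r_0$, whose PDF is $f(r_0)$ in \eqref{eq:PDF-of-r0}, pull the deterministic factor $(\mathcal{G}(r_0))^{it}$ outside the spatial expectation, and then average the remaining product over the interfering process $\PhiI$.

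Next, I would apply the probability generating functional (PGFL). Given $r_0$, the set $\PhiI$ is the inhomogeneous PPP with radial density $\lambda^{*}$ in \eqref{eq:intensity-inhomogenous}, namely $\lambdaBS L(\lambdaBS,\lambdaMT)$ for $r>r_0$ and $0$ otherwise. The PGFL therefore yields
\[
\e_{\PhiI}\!\left[\prod_{i\in\PhiI}\!\left(1+\gamma(r_0/r_i)^{\alpha}\right)^{-it}\right]
=\exp\!\left(-2\pi\lambdaBS L(\lambdaBS,\lambdaMT)\!\int_{r_0}^{\infty}\!\!\left(1-\left(1+\gamma(r_0/r)^{\alpha}\right)^{-it}\right)r\,dr\right).
\]
The substitution $w=(r_0/r)^{\alpha}$ turns the radial integral into $\tfrac{r_0^{2}}{\alpha}\int_0^1\big(1-(1+\gamma w)^{-it}\big)w^{-2/\alpha-1}\,dw$, which converges at $w=0$ precisely because $2/\alpha<1$ (as $\alpha>2$). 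Expanding $(1+\gamma w)^{-it}$ as a generalized binomial series and integrating term by term against $w^{-2/\alpha-1}$ produces a power series in $-\gamma$ whose coefficients carry the Pochhammer ratio $(-2/\alpha)_n/(1-2/\alpha)_n$; recognizing this series shows that $2\int_1^{\infty}\big(1-(1+\gamma v^{-\alpha})^{-it}\big)v\,dv={}_2F_1(-2/\alpha,it;1-2/\alpha;-\gamma)-1$. Combining this exponent with the factor $e^{-\pi\lambdaBS r_0^{2}}$ from $f(r_0)$ collapses the total exponent to $-\pi\lambdaBS r_0^{2}\,\mathcal{F}(it,\alpha,\gamma)$, which reproduces \eqref{eq:characteristic-function-general}.

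The step I expect to be the main obstacle is the evaluation of that radial integral and its identification with ${}_2F_1$: one must justify the interchange of summation and integration (by absolute convergence of the binomial series together with the integrability secured by $2/\alpha<1$) and verify the telescoping of the Pochhammer ratio that reconstructs exactly $\mathcal{F}(it,\alpha,\gamma)=1+L(\lambdaBS,\lambdaMT)\big({}_2F_1(-2/\alpha,it;1-2/\alpha;-\gamma)-1\big)$. A secondary technical subtlety arises in the SIR+ASNR case, where $\mathcal{G}(r_0)$ is an indicator: there one adopts the convention $0^{it}=0$, so that $(\mathcal{G}(r_0))^{it}=\mathbbm{1}\big(r_0\le(P/(KW\theta))^{1/\alpha}\big)$. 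This is consistent with $Z=-\infty$ on the positive-probability packet-loss event $\{\pcovcon=0\}$ (Proposition \ref{proposition:limit-of-F1-general-case}), so $\varphi_Z$ is understood as the corresponding transform restricted to $\{\pcovcon>0\}$; for the SIR- and SINR-based criteria $\mathcal{G}(r_0)>0$ and no such subtlety appears.
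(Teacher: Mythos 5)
Your proposal is correct and follows essentially the same route as the paper: write $\varphi_Z(t)=\e_\Phi[(\pcovcon)^{it}]$, substitute the product form of the CCP from Lemma \ref{lemma:coverage-probability-general-formula}, separate the conditioning into $r_0$ and $\PhiI$, apply the PGFL with the inhomogeneous density \eqref{eq:intensity-inhomogenous}, and average over $r_0$ using \eqref{eq:PDF-of-r0}. The paper leaves the radial integral and its identification with ${}_2F_1$ implicit, whereas you spell them out (and flag the $0^{it}$ convention for the SIR+ASNR indicator); these are useful elaborations but not a different argument.
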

	
	\begin{proof}
		By definition, the characteristic function of $Z =\log(\pcovcon)$ is:
		\begin{align}\label{eq:characteristic-function-definition}
			\varphi_Z(t) \mydef \e_{\Phi}\left[e^{itZ}\right] = \e_{\Phi}\left[\left(\pcovcon\right)^{it}\right] 
		\end{align}
		By substituting the conditional coverage probability $\pcovcon$ in \eqref{eq:coverage-probability-general-formula} into 
		\eqref{eq:characteristic-function-definition}, we obtain:
		\begin{align}
			\varphi_Z(t) 
			&= 
			\e_{\Phi}\left[\left(\mathcal{G}(r_0)\prod_{i\in \PhiI} \left(1+\gamma\left(\frac{r_0}{r_i}\right)^\alpha\right)^{-1}\right)^{it}\right]  \nonumber\\
			&\mystepA \e_{r_0}\left[\e_{\PhiI}\left[\prod_{i\in\PhiI} \left(1+\gamma\left(\frac{r_0}{r_i}\right)^\alpha\right)^{-it}\Bigr|~r_0\right]\left(\mathcal{G}(r_0)\right)^{it}\right].\nonumber
		\end{align}
		where $(a)$ follows from separating the conditioning on $\Phi$ into $\PhiI$ and $r_0$.		
		The proof follows by applying the probability generating functional (PGFL) theorem to the point process $\PhiI$ \cite{chiu2013stochastic} and by computing the expectation over $r_0$. 
	\end{proof}
	
	\begin{remark} \label{remark:characteristic-function-exact-expression}
		Let $\varphi_Z^{\textup{sir}}(t)$, $\varphi_Z^{\textup{sinr}}(t)$, and $\varphi_Z^{\textup{sir+asnr}}(t)$ denote the characteristic function of $Z$, i.e., $\varphi_Z(t)$, that correspond to the SIR, SINR, and SIR+ASNR based coverage criteria, respectively.
		By substituting  $\mathcal{G}(r_0)$ in \eqref{eq:G-aux} into \eqref{eq:characteristic-function-general} for each of the coverage criteria, we obtain:
		\begin{align}
			\varphi_Z^{\textup{sir}}(t) 
			&= \frac{1}{\mathcal{F}\left(it,\alpha,\gamma\right)} \label{eq:characteristic-function-SIR}\\
			\varphi_Z^{\textup{sinr}}(t) 
			&= 2\pi\lambdaBS\int_0^{\infty} r_0e^{-\pi\lambdaBS {r_0}^2\mathcal{F}\left(it,\alpha,\gamma\right)}  \left(e^{-\frac{\gamma WK{r_0}^\alpha}{P}}\right)^{it}dr_0 \label{eq:characteristic-function-SINR}\\
			\varphi_Z^{\textup{sir+asnr}}(t) 
			&= \frac{1-e^{-\pi\lambdaBS\left(\frac{P}{KW\theta}\right)^{\frac{2}{\alpha}}\mathcal{F}\left(it,\alpha,\gamma\right)}}{\mathcal{F}\left(it,\alpha,\gamma\right)} \label{eq:characteristic-function-SIR+ASNR}
		\end{align}
		From \eqref{eq:characteristic-function-SIR}, \eqref{eq:characteristic-function-SINR}, and \eqref{eq:characteristic-function-SIR+ASNR}, we can see that, under the SIR-based or SIR+ASNR-based criterion, $\varphi_Z(t)$ is given in a closed-form expression. Under the SINR-based criterion, on the other hand, $\varphi_Z(t)$ is, in general, not given in a closed-form expression.
	\end{remark}

	\begin{theorem}\label{thm:F2-F3-distribution-general}
		The $F_2$ and $F_3$ distributions can be formulated as follows:
		\begin{align}
			F_2(T)
			&= \frac{1}{2} - \frac{1}{\pi}\int_0^{\infty} \frac{\Im\left[T^{it}\varphi_Z(t)\right]}{t}dt \label{eq:F2-distribution-general}\\
			F_3(x,\tau)
			&= \frac{1}{2} - \frac{1}{\pi}\int_0^{\infty} \frac{\Im\left[{\left(1-x^{1/\tau}\right)}^{it}\varphi_Z(t)\right]}{t}dt. \label{eq:F3-distribution-general}
		\end{align}

	\end{theorem}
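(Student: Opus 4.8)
The plan is to collapse both distributions to the evaluation of the cumulative distribution function (CDF) of the single random variable $Z=\log(\pcovcon)$ at a transformed threshold, and then to recover that CDF from the characteristic function $\varphi_Z$ of Lemma~\ref{lemma:characteristic-function} via the Gil--Pelaez inversion theorem. Writing $F_Z(z)=\pr_\Phi[Z\le z]$, the target is the classical identity $F_Z(z)=\tfrac12-\tfrac1\pi\int_0^\infty \frac{\Im[e^{-itz}\varphi_Z(t)]}{t}\,dt$, so the whole argument reduces to (i) rewriting the defining events of $F_2$ and $F_3$ as events of the form $\{Z\le z\}$, and (ii) substituting the appropriate $z$.

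For $F_2$, the geometric law in \eqref{eq:pdf-mean-time-slots} gives $\e_h[\Delta_\Phi\mid\Phi]=(\pcovcon)^{-1}$, hence $F_2(T)=\pr_\Phi[(\pcovcon)^{-1}\ge T]=\pr_\Phi[\pcovcon\le 1/T]$. Since $z\mapsto\log z$ is increasing this is exactly $F_Z(-\log T)$, and substituting $z=-\log T$ (so that $e^{-itz}=T^{it}$) yields \eqref{eq:F2-distribution-general}. For $F_3$, the same geometric law gives the conditional survival function $\pr_h[\Delta_\Phi>\tau\mid\Phi]=(1-\pcovcon)^{\tau}$ (the form already used to obtain $F_1$), so $F_3(x,\tau)=\pr_\Phi[(1-\pcovcon)^{\tau}\ge x]$. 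Because $\pcovcon\in[0,1]$ and $s\mapsto(1-s)^{\tau}$ is strictly decreasing there, the event inverts to $\{\pcovcon\le 1-x^{1/\tau}\}=\{Z\le\log(1-x^{1/\tau})\}$, and substituting $z=\log(1-x^{1/\tau})$ into the inversion formula delivers \eqref{eq:F3-distribution-general}. A small bookkeeping point worth flagging is the sign of the exponent: the reciprocal in the $F_2$ threshold flips $-\log T$ into the factor $T^{it}$, whereas the $F_3$ threshold carries no reciprocal, so the exponents $T^{it}$ and $(1-x^{1/\tau})^{\pm it}$ must be tracked with care when matching \eqref{eq:F3-distribution-general}.

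The step I expect to be the genuine obstacle is justifying the inversion itself under the SIR+ASNR criterion. There $\pcovcon=0$ on an event of positive probability $P_e=\exp(-\pi\lambdaBS(P/(KW\theta))^{2/\alpha})$ (Corollary~\ref{corollary:limit-of-F1-SIR+ASNR}), so $Z=\log\pcovcon$ places an atom at $-\infty$; correspondingly the characteristic function in \eqref{eq:characteristic-function-SIR+ASNR} is \emph{defective}, with $\varphi_Z(0)=1-P_e<1$ rather than $1$. Both distributions must count this atom, since on $\{\pcovcon=0\}$ one has $\e_h[\Delta_\Phi\mid\Phi]=\infty\ge T$ and $\pr_h[\Delta_\Phi>\tau\mid\Phi]=1\ge x$. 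I would therefore verify whether the Gil--Pelaez integral built from the sub-stochastic $\varphi_Z$ reproduces the atom mass or whether a correction accounting for the packet-loss probability must be added separately; the sine-integral evaluation $\int_0^\infty t^{-1}\sin(at)\,dt=\tfrac{\pi}{2}\operatorname{sgn}(a)$ makes this check explicit and isolates any boundary contribution at $z\to-\infty$. By contrast, under the SIR-based and SINR-based criteria $P_e=0$ (Corollary~\ref{corollary:limit-of-F1-SIR-and-SINR}), so $Z$ is a genuine atomless random variable supported on $(-\infty,0]$ and the inversion applies without modification; the remaining requirement is then only the mild regularity (local integrability of $\varphi_Z(t)/t$ near the evaluation point and decay at infinity) guaranteeing Gil--Pelaez, which is inherited from the explicit forms in \eqref{eq:characteristic-function-SIR}--\eqref{eq:characteristic-function-SIR+ASNR}.
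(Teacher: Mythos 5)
Your proposal follows exactly the paper's route: both reduce $F_2$ and $F_3$ to the CDF of $Z=\log(\pcovcon)$ at the thresholds $-\log T$ and $\log\bigl(1-x^{1/\tau}\bigr)$, and then invoke the Gil--Pelaez inversion theorem with the characteristic function of Lemma~\ref{lemma:characteristic-function}. The two caveats you flag are real, and the paper's proof addresses neither. On the sign: since \eqref{eq:F2-distribution-general} uses the kernel $e^{-itz}$ with $z=-\log T$ (giving $T^{it}$), consistency requires the $F_3$ kernel to be $e^{-it\log(1-x^{1/\tau})}=\bigl(1-x^{1/\tau}\bigr)^{-it}$, so the exponent in \eqref{eq:F3-distribution-general} as printed appears to carry the wrong sign (the paper's proof even writes the substitution as $z=1-x^{1/\tau}$ rather than its logarithm). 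On the atom: under the SIR+ASNR criterion $\pr[\pcovcon=0]=P_e>0$, the law of $Z$ restricted to $\mathbb{R}$ has total mass $1-P_e$, and $\varphi_Z(0)=1-P_e$; writing that restricted law as $\mu$, the right-hand side of the inversion formula evaluates to $\tfrac{P_e}{2}+\mu\bigl((-\infty,z]\bigr)$ while the true CDF is $P_e+\mu\bigl((-\infty,z]\bigr)$, so the stated formulas undercount by $P_e/2$ in that case and an explicit atom correction is needed -- precisely the check you propose. For the SIR- and SINR-based criteria $P_e=0$ and your argument closes the proof as the paper intends.
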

	
	\begin{proof} 
		Recall that $\Delta_{\Phi}$ is a geometrically distributed random variable whose mean is $\left(\pcovcon\right)^{-1}$.
		Let $F_{Z}(\cdot)$ denote the CDF of $Z$.
		From  \eqref{eq:Delay-distribution-F2}, thus, we have $F_2(T) 
		= \pr_\Phi\left[\pcovcon \leq \frac{1}{T}\right] = \pr\left[Z \leq -\log T\right] = F_{Z}(-\log T)$.
		From \eqref{eq:Delay-distribution-F3}, also, we have $F_3(x,\tau)
		= \pr_\Phi\left[\left(1-\pcovcon\right)^\tau \geq x\right] = \pr\left[Z \leq \log\left(1-x^{1/\tau}\right)\right]$ 
		$= F_{Z}\left(1-x^{1/\tau}\right) $.
		The proof follows from the inversion theorem \cite{gil1951note} and substituting $z=-\log T$ and $z=1-x^{1/\tau}$ for the $F_2$ and $F_3$ distributions, respectively.
	\end{proof}

	\section{Approximations for the Delay Distributions} 
	\label{sec:Approximations}
	
	The analytical formulations for the distributions of the delay derived in the previous section may not be easy to be computed in some cases. In particular, the $F_1$ distribution in Theorem \ref{thm:F1-distribution-general} is difficult to compute for large values of $\tau$ due to the binomial coefficients. The difficulty of computing the $F_2$ and $F_3$ distributions in Theorem \ref{thm:F2-F3-distribution-general} lies in numerically computing, with high precision, the integral expressions in \eqref{eq:F2-distribution-general} and \eqref{eq:F3-distribution-general}. Motivated by these considerations, in this section, we develop accurate and efficient numerical approximations for computing the $F_1$, $F_2$, and $F_3$ distributions. 
	
	\subsection{Approximation for the $F_2$ and $F_3$ Distributions via the Inverse Laplace Transform}
	
	To facilitate the numerical computation of the $F_2$ and $F_3$ distributions, we utilize the numerical inversion of the Laplace transform based on the Euler-sum method. This method is based on \cite{abate1995numerical} and it was used in \cite{ko2000outage} to compute the outage probability of diversity systems over generalized fading channels. More recently, it was also used in \cite{di2014stochastic} and \cite{wang2019meta}. Using this technique, the CDF of a random variable $Y$ can be approximated as
	\begin{align}\label{eq:inversion-of-Laplace-transform}
		F_{Y}(y)
		&= \frac{2^{-Q}e^{A/2}}{y}\sum_{q=0}^{Q}\sum_{n=0}^{N+q}C^Q_q\frac{(-1)^n}{\beta_n} \Re\left\{\hat{F}_Y\left(\frac{A+i2\pi n}{2y}\right)\right\} + E(A,Q,N).
	\end{align}
	where $\hat{F}_Y(s)$ is the Laplace transform of $F_Y(y)$, $E(A,Q,N)$ is the approximation error that is determined by the three parameters $A$, $N$, and $Q$, 
	and $\beta_n$ is defined as $\beta_n = 2$ if $n=0$ and $\beta_n = 1$ if $n > 0$. In this paper, we use the same parameters as in \cite{ko2000outage}, namely $A = 10\log 10 \approx 23.03$, which guarantees a discretization error of the order of $10^{-10}$, as well as $N = 21,$ and $Q=15$ to ensure that the resulting truncation error is less than $10^{-10}$.
	
	\begin{proposition}\label{proposition:F2-F3-approximation-inverse}
		The $F_2$ and $F_3$ distributions can be approximated as follows:
		\begin{align} 
			F_2(T)
			&\approx 1 - \frac{e^{A/2}}{2^{Q-1}}\sum_{q=0}^{Q}\sum_{n=0}^{N+q}C^Q_q\frac{(-1)^n}{\beta_n} \Re\left\{\frac{\varphi_Z\left(\frac{-iA+2\pi n}{2\log(T)}\right)}{(A+i2\pi n)}\right\}, \\
			F_3(x,\tau)
			&\approx 1 - \frac{e^{A/2}}{2^{Q-1}}\sum_{q=0}^{Q}\sum_{n=0}^{N+q}C^Q_q\frac{(-1)^n}{\beta_n}  \Re\left\{\frac{\varphi_Z\left(\frac{iA-2\pi n}{2\log(1-x^{1/\tau})}\right)}{(A+i2\pi n)}\right\}.
		\end{align}
	\end{proposition}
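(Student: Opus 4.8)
The plan is to reduce both distributions to a single Laplace-transform inversion applied to the law of $Z=\log(\pcovcon)$, whose characteristic function $\varphi_Z(t)$ is already available in closed form from Lemma~\ref{lemma:characteristic-function} and Remark~\ref{remark:characteristic-function-exact-expression}. Starting from the identities $F_2(T)=F_Z(-\log T)$ and $F_3(x,\tau)=F_Z(\log(1-x^{1/\tau}))$ established in the proof of Theorem~\ref{thm:F2-F3-distribution-general}, I first observe that $\pcovcon\in(0,1]$, so $Z\le 0$ almost surely. Hence I introduce the nonnegative random variable $Y=-Z=-\log(\pcovcon)$ and rewrite the two distributions as complementary CDFs of $Y$ evaluated at \emph{positive} arguments, namely $F_2(T)=1-F_Y(\log T)$ for $T>1$ and $F_3(x,\tau)=1-F_Y(-\log(1-x^{1/\tau}))$ for $x\in(0,1)$. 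This reflection is the step that makes the Abate--Whitt Euler-sum formula \eqref{eq:inversion-of-Laplace-transform} directly applicable, since that formula inverts the Laplace transform of the CDF of a \emph{nonnegative} variable.

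The key algebraic link is to express the Laplace transform $\hat{F}_Y(s)$ of the CDF in terms of $\varphi_Z$. Using integration by parts together with $F_Y(0)=0$ (which holds because $\pcovcon=1$ occurs with probability zero, each interfering factor being strictly less than one and there being a.s.\ infinitely many interferers), I obtain $\hat{F}_Y(s)=\tfrac{1}{s}\,\mathcal{M}_Y(-s)=\tfrac{1}{s}\,\e\!\left[e^{-sY}\right]=\tfrac{1}{s}\,\e\!\left[(\pcovcon)^{s}\right]$. Comparing with $\varphi_Z(t)=\e[(\pcovcon)^{it}]$ and setting $it=s$ yields $\hat{F}_Y(s)=\tfrac{1}{s}\,\varphi_Z(-is)$. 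Because $\pcovcon\in(0,1]$, the quantity $\e[(\pcovcon)^{s}]$ is bounded by $1$ for every $s$ with $\Re(s)\ge 0$, so $\hat{F}_Y$ is well defined and analytic on the relevant right half-plane and $\varphi_Z$ admits the required continuation to the complex arguments appearing below.

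It then remains to substitute the Laplace variable $s=\tfrac{A+i2\pi n}{2y}$ prescribed by \eqref{eq:inversion-of-Laplace-transform}, with $y=\log T$ for $F_2$ and $y=-\log(1-x^{1/\tau})$ for $F_3$. The factor $\tfrac{1}{s}$ inside $\hat{F}_Y$ produces a term $\tfrac{2y}{A+i2\pi n}$ that cancels the $\tfrac{1}{y}$ prefactor of the inversion formula and collapses $2^{-Q}$ into $2^{-(Q-1)}$; simplifying the argument $-is=\tfrac{-iA+2\pi n}{2y}$ gives exactly $\varphi_Z\!\left(\tfrac{-iA+2\pi n}{2\log T}\right)$ for $F_2$ and, once the sign of $y$ is tracked, $\varphi_Z\!\left(\tfrac{iA-2\pi n}{2\log(1-x^{1/\tau})}\right)$ for $F_3$. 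Taking the complementary probabilities $1-F_Y(\cdot)$ reproduces the two claimed expressions, with the truncation and discretization error absorbed into $E(A,Q,N)$ and controlled by the chosen constants $A$, $N$, $Q$, which is what turns the exact equality into the stated ``$\approx$''.

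The main obstacle I anticipate is bookkeeping the sign of $Z$ and the associated reflection $Y=-Z$ consistently through the complex substitution: a misplaced sign in $-is$ would corrupt the complex argument fed to $\varphi_Z$ and silently return the wrong value, so the derivation must carry the negation and the sign of $\log(1-x^{1/\tau})$ explicitly. A secondary point requiring care is the legitimacy of evaluating the characteristic function at complex arguments, which I justify through the boundedness of $(\pcovcon)^s$ on $\Re(s)\ge 0$ noted above; the remaining manipulations are the routine prefactor cancellations and therefore do not need to be spelled out in detail.
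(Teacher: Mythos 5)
Your proposal is correct and follows essentially the same route as the paper: define $Y=-\log(\pcovcon)$, write $F_2$ and $F_3$ as $1-F_Y(\cdot)$ at $y=\log T$ and $y=-\log(1-x^{1/\tau})$ respectively, use $F_Y(0)=0$ to get $\hat{F}_Y(s)=\varphi_Z(-is)/s$, and substitute into the Euler-sum inversion formula. Your sign bookkeeping for the complex arguments and the prefactor cancellation both check out, and your added justifications (analytic continuation of $\varphi_Z$, the reason $F_Y(0)=0$) only elaborate on steps the paper leaves implicit.
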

	
	\begin{proof}
		Define $Y = -\log(\pcovcon)$. From \eqref{eq:Delay-distribution-F2}, the $F_2$ distribution can be written as:
		\begin{eqnarray}\label{eq:F2-FY-relationship}
			F_2(T) 
			= \pr\left[Y \geq \log T\right]
			= 1 - F_{Y}(\log(T)).
		\end{eqnarray}
		Since $F_{Y}(0) = \pr[-\log(\pcovcon) < 0] = \pr\left[\pcovcon > 1\right] = 0$, the Laplace transform of $F_{Y}(y)$, i.e., $\hat{F}_{Y}(y)$, can be formulated as $\hat{F}_{Y}(s) = \frac{\varphi_{Z}(-is)}{s}$.
		The final approximation follows by substituting $y=\log (T) $ and $\hat{F}(s) =  {\varphi_Z(-is)}/{s}$ into \eqref{eq:inversion-of-Laplace-transform} and plugging the resulting expression of $F_Y(y)$ into \eqref{eq:F2-FY-relationship}. The proof for the $F_3$ distribution is obtained similarly by setting $y=-\log\left(1-x^{1/\tau}\right)$.
	\end{proof}
	
	\begin{remark}
		The approximations in Proposition \ref{proposition:F2-F3-approximation-inverse} have a singularity at $T = 1$ when computing the $F_2$ distribution and at $x \in \{0,1\}$ when computing the $F_3$ distribution. This, however, poses no problems, since $F_2 (T=1) = 1$,  $F_3 (x=0,\tau) = 1$, and $F_3 (x=1,\tau) = 0$ by definition. Therefore, there is no need to apply  Proposition \ref{proposition:F2-F3-approximation-inverse} for these specific values.
	\end{remark}
	
	\subsection{Approximation of the $F_2$ and $F_3$ Distributions by Using the Beta Distribution}
	
	The approximations of the $F_2$ and $F_3$ distributions provided in Proposition \ref{proposition:F2-F3-approximation-inverse} require the computation of the characteristic function $\varphi_Z(\cdot)$. If the SINR-based coverage criterion is considered, $\varphi_Z(\cdot)$ is available only in integral form, as shown in \eqref{eq:characteristic-function-SINR}. An alternative approach to approximate the $F_2$ and $F_3$ distributions is to utilize the beta distribution, which is shown in \cite{haenggi2015meta} to offer accurate estimates for distributions whose support lies in the entire range $[0,1]$. 
	\begin{proposition}\label{proposition:F2-F3-approximation-Beta}
		Let $\mu$ and $\nu$ be be the mean and variance of $\pcovcon$, respectively, given as follows:
		\begin{align}
			\mu &\mydef \e[\pcovcon] = \varphi_{Z}(-i), \\
			\nu &\mydef \e\left[(\pcovcon)^2\right] - \e\left[\pcovcon\right] = \varphi_{Z}(-2i) - \varphi_{Z}(-i)
		\end{align}
		As for the SINR-based coverage, the $F_2$ and $F_3$ distributions can be approximated as:
		\begin{equation}
			F_2(T) \approx \frac{B(1/T;a,b)}{B(a,b)}, \qquad
			F_3(x,\tau) \approx \frac{B\left(1-x^{1/\tau},a,b\right)}{B(a,b)}. \label{eq:F2-F3-approximation-Beta}
		\end{equation}
		where $B(a,b) = {\Gamma(a)\Gamma(b)}/{\Gamma(a+b)}$ is the Beta function, $B(x,a,b) = \int_{0}^{x} t^{a-1}(1-t)^{b-1}dt$ is the incomplete Beta function, $b = \frac{\mu(1-\mu)^2}{\nu} - (1-\mu)$, and  $a = \frac{\mu b}{1-\mu}$.
	\end{proposition}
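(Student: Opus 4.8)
The plan is to treat $\pcovcon$ as a random variable supported on $[0,1]$ and approximate its distribution by a Beta distribution, fixing the two parameters by moment matching. The natural starting point is the pair of identities established in the proof of Theorem \ref{thm:F2-F3-distribution-general}, namely $F_2(T) = \pr_\Phi\left[\pcovcon \leq 1/T\right]$ and $F_3(x,\tau) = \pr_\Phi\left[\pcovcon \leq 1-x^{1/\tau}\right]$. Thus both the $F_2$ and the $F_3$ distributions are exactly the CDF of $\pcovcon$ evaluated at the arguments $1/T$ and $1-x^{1/\tau}$, respectively, so it suffices to approximate this single CDF.

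First I would approximate the CDF of $\pcovcon$ by that of a $\textup{Beta}(a,b)$ random variable. Since the CDF of a $\textup{Beta}(a,b)$ variable is the regularized incomplete Beta function $B(z,a,b)/B(a,b)$, substituting $z=1/T$ and $z=1-x^{1/\tau}$ reproduces directly the right-hand sides of \eqref{eq:F2-F3-approximation-Beta}. This is the step that invokes \cite{haenggi2015meta}: the Beta family is a flexible two-parameter family on $[0,1]$ that accurately fits distributions whose support is the whole unit interval, which is precisely the situation for the coverage probability $\pcovcon$, and in particular it is the case under the SINR-based criterion for which $\varphi_Z$ is available only in the integral form \eqref{eq:characteristic-function-SINR}.

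Next I would determine $a$ and $b$ by matching the first two moments. For a $\textup{Beta}(a,b)$ random variable the mean equals $a/(a+b)$ and the variance equals $ab/\left[(a+b)^2(a+b+1)\right]$; equating these to the mean $\mu$ and the variance of $\pcovcon$ and solving the two resulting equations for $a$ and $b$ yields the stated closed forms $b = \mu(1-\mu)^2/\nu - (1-\mu)$ and $a = \mu b/(1-\mu)$. This inversion is routine algebra once the moments are in hand. The moments themselves come from the characteristic function of $Z=\log(\pcovcon)$: because $\varphi_Z(t) = \e_\Phi\left[(\pcovcon)^{it}\right]$, evaluating at $t=-i$ and $t=-2i$ gives $\e\left[\pcovcon\right] = \varphi_Z(-i)$ and $\e\left[(\pcovcon)^2\right] = \varphi_Z(-2i)$, from which $\mu$ and the variance follow immediately; under the SINR-based criterion these are obtained by inserting $t=-i$ and $t=-2i$ into \eqref{eq:characteristic-function-SINR}.

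The main point to flag is that, unlike the inversion-based approximation of Proposition \ref{proposition:F2-F3-approximation-inverse}, this is a moment-matching heuristic rather than an exact identity, so there is no error term to control. Its justification is twofold: the empirical accuracy of the Beta fit for $[0,1]$-valued coverage probabilities, as established in \cite{haenggi2015meta}, and the fact that the two matched moments are exactly computable through evaluations of $\varphi_Z$ at imaginary arguments. Consequently the "proof" reduces to the three clean ingredients above — the CDF reinterpretation of $F_2$ and $F_3$, the moment formulas via $\varphi_Z$, and the standard Beta moment-matching inversion — with no delicate estimate required.
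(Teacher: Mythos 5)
Your proposal is correct and follows essentially the same route as the paper's own proof: rewrite $F_2$ and $F_3$ as the CDF of $\pcovcon$ evaluated at $1/T$ and $1-x^{1/\tau}$, invoke the Beta fit for $[0,1]$-supported coverage probabilities, and fix $a,b$ by matching the mean and variance obtained from $\varphi_Z(-i)$ and $\varphi_Z(-2i)$. If anything, your write-up is more complete than the paper's, since you make explicit the Beta moment formulas and the algebraic inversion that the paper only asserts, and you correctly treat $\nu$ as the variance $\e[(\pcovcon)^2]-(\e[\pcovcon])^2$, which is what the stated formula for $b$ actually requires.
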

	\begin{proof}
		From \eqref{eq:Delay-distribution-F2} and \eqref{eq:Delay-distribution-F3}, the $F_2$ and $F_3$ distributions can be written as $F_2(T) 
		= \pr\left[\pcovcon \leq {1}/{T}\right]$ and $F_3(x,\tau) 
		= \pr\left[\pcovcon \leq 1-x^{1/\tau}\right]$, respectively.
		Under the SINR-based coverage criterion, $\pcovcon$ lies in the  range $[0,1]$. Thus, $F_2(T)$ can be approximated with a Beta distribution whose shape parameters $a$ and $b$ are obtained by matching the mean $\mu$ and variance $\nu$ of $\pcovcon$, i.e., $a = \mu b/(1-\mu)$ and $b = \mu(1-\mu)^2/\nu - (1-\mu)$.
	\end{proof}
	
	\begin{remark}
		As remarked in \cite{haenggi2018efficient}, the Beta approximation can be applied only if the original distribution falls within
		the class of Beta distributions, e.g., the domain of the random variable is the entire interval $[0,1]$. As shown in \cite{wang2019meta}, under the assumption of SIR+ASNR-based coverage, on the other hand, the conditional coverage probability does not necessarily fall within this class. Therefore, the Beta approximation may not be applicable.
	\end{remark}

	\begin{remark}
		The Beta approximation in Proposition \ref{proposition:F2-F3-approximation-Beta} relies on matching the shape of the Beta distribution to that of the original $F_2$ and $F_3$ distributions. This is different from the Euler-sum inversion method in Proposition \ref{proposition:F2-F3-approximation-inverse}, which is an efficient numerical computation of the original distribution with a known truncation error. Therefore, when both approaches are applicable, one is encouraged to use the Euler-sum inversion method for a better accuracy.
	\end{remark}

	\subsection{Approximation of the $F_1$ Distribution via the $F_3$ Distribution}
	The $F_1$ and $F_3$ distributions are related to each other through the following integral relation:
	\begin{equation}
		F_1(\tau) 
		\mystepA  \int_0^{\infty} \pr_\Phi\left[\pr_h \left[\Delta_\Phi > \tau~|~\Phi\right] > x\right]dx  
		=  \int_0^{1} F_3(x,\tau)dx
	\end{equation}
	where $(a)$ follows from the definition of the $F_1$ distribution, i.e., $F_1(\tau) = \e_\Phi\left[\pr_h \left[\Delta > \tau~|~\Phi\right]\right]$, and writing the outer expectation operator as the integral of the CDF. Let $p =\{p_i\}$ be an ordered sequences of real numbers between 0 and 1, i.e., $0 = p_0 < p_1 < \ldots < p_{n-1} < p_n = 1$. Also, let ${\epsilon} =\{\epsilon_i\}$ be a sequence that satisfies $p_{i-1} < \epsilon_i \leq p_i, \forall i\in\{1,2,\ldots,n\}$. The $F_1$ distribution can be expressed as a limit of a sum of the $F_3$ distribution, i.e.,
	\begin{equation}\label{eq:F1-as-sum-of-F3}
		F_1(\tau) = \lim_{n\to\infty} \sum_{i=1}^n (p_i - p_{i-1})F_3(\epsilon_i,\tau).
	\end{equation}
	
	\begin{remark}
		By using \eqref{eq:F1-as-sum-of-F3} with a sufficiently large value of $n$, we can compute the $F_1$ distribution efficiently.
		By using this technique, we avoid working with large values of the binomial coefficients that appear in Theorem \ref{thm:F1-distribution-general}, which usually leads to numerical issues in evaluating the distribution.
	\end{remark}

	\section{Numerical Results}\label{sec:Numerical-Results}
	
	\begin{table}[!t]
		\footnotesize
		\centering
		\caption{Simulation setup.}
		\label{table:simulation_setup}
		\newcommand{\tabincell}[2]{\begin{tabular}{@{}#1@{}}#2\end{tabular}}
		\begin{tabular}{|l|l|} \hline
			Parameter & \hspace{0.5cm} Value\\ 
			\hline
			$\alpha$ & $4$\\
			$f_c$	& $2.1$ GHz \\
			$K$ \quad & $(4\pi f_c/3\cdot10^{8})^2$ \\
			$N_0$ & $-174$ dBm/Hz \\
			$B_W$   & $200$ MHz\\
			$P$   & $43$ dBm\\
			$\textup{R}_{\textup{MT}}$ & $50$ m\\
			$\lambdaMT$ & $1/(\pi \textup{R}^2_{\textup{MT}})$   \\
			$\gamma_A$ & $12.5$ dBm  \\
			\hline
		\end{tabular}
	\end{table}
	
	In this section, we show numerical results to validate the proposed analytical frameworks for computing the local delay and the delay distributions, as well as to substantiate the obtained findings on the packet loss probability. Unless otherwise stated, the simulation setup is summarized
	in Table \ref{table:simulation_setup}. 
	The simulation results presented in this section are obtained through Monte Carlo simulations, by generating $5,000$ realizations of the BS and MT, as well as simulating $5,000$ packet transmissions for each point process (spatial) realization.

	\subsection{Local Delay}
	
	\begin{figure}[!t]
		\centering
		\begin{center}
			\includegraphics[width=0.6\linewidth]{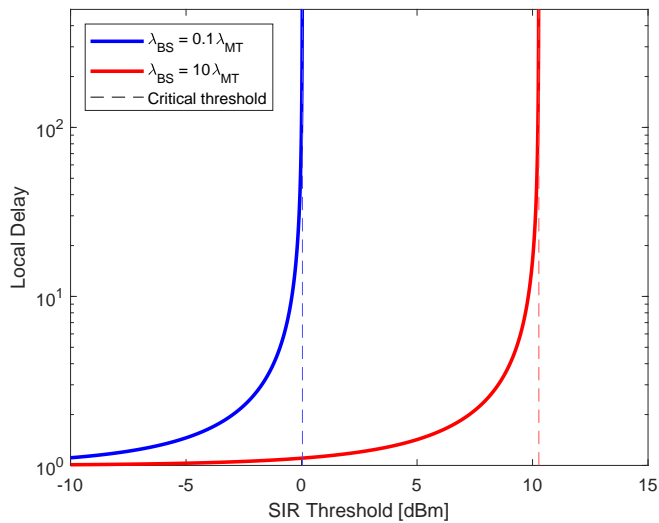}
			\caption{Local delay as a function of the decoding threshold (SIR-based coverage criterion).}
			\label{fig:LocalDelaySIR}
		\end{center} \vspace{-1cm}
	\end{figure}
	
	Fig. \ref{fig:LocalDelaySIR} validates the analytical framework for the local delay that corresponds to the SIR-based coverage criterion, as a function of the threshold $\gamma_D$ for different values of the BS density $\lambdaBS$. The theoretical results are computed by using \eqref{eq:local-delay-SIR}. From the figure, we can see that the local delay increases with the BS density and the decoding threshold. This reveals that, given a spatial realization of BSs and MTs, a typical MT is more likely to successfully delivers its data packets when the number of BSs increases due to shorter distances, on average, despite the MT is likely to experience more interference due to higher number of interfering BSs. We observe, in addition, that the local delay tends to infinity near the critical threshold given in \eqref{eq:critical-treshold}.

	\subsection{$F_1$ Distribution}
	
	\begin{figure}[!t]
		\begin{center}
			\begin{subfigure}{0.48\columnwidth}
				{\includegraphics[width=0.98\linewidth]{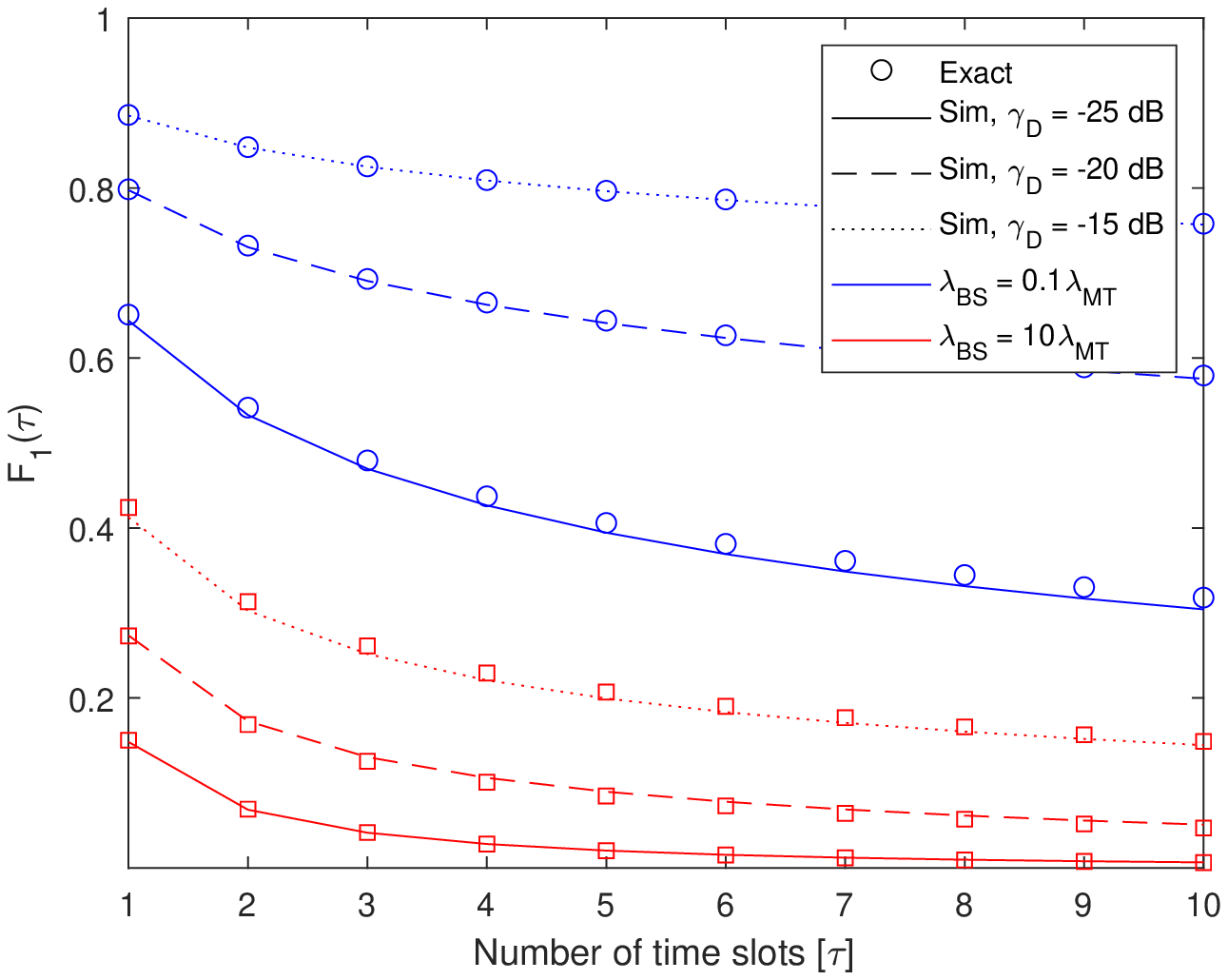}}
				\caption{SIR-based coverage.}\label{fig:F1_SIR}
			\end{subfigure}
			\begin{subfigure}{0.48\columnwidth}
				{\includegraphics[width=0.98\linewidth]{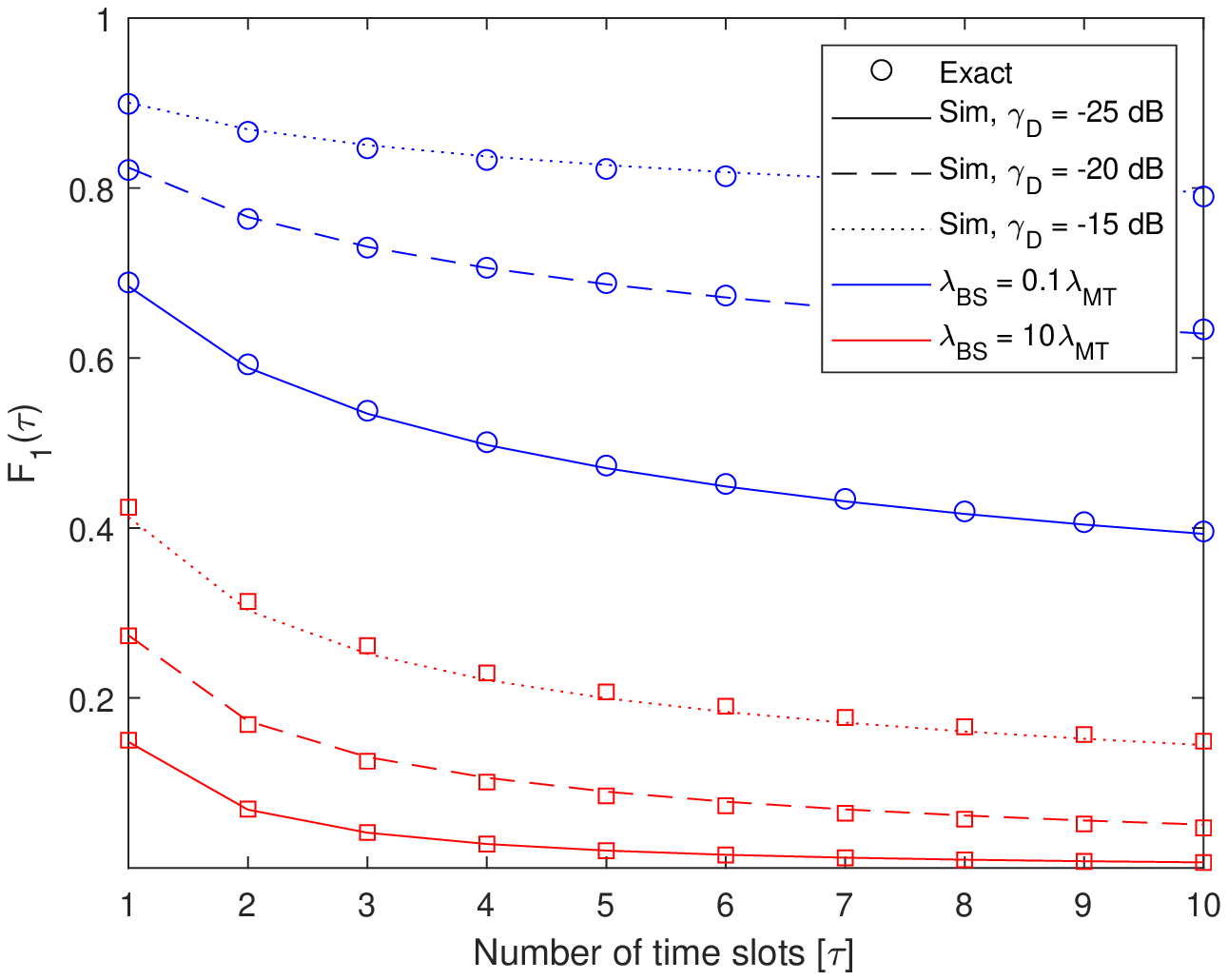}}
				\caption{SINR-based coverage.}\label{fig:F1_SINR}
			\end{subfigure}
			\begin{subfigure}{0.48\columnwidth}
				{\includegraphics[width=0.98\linewidth]{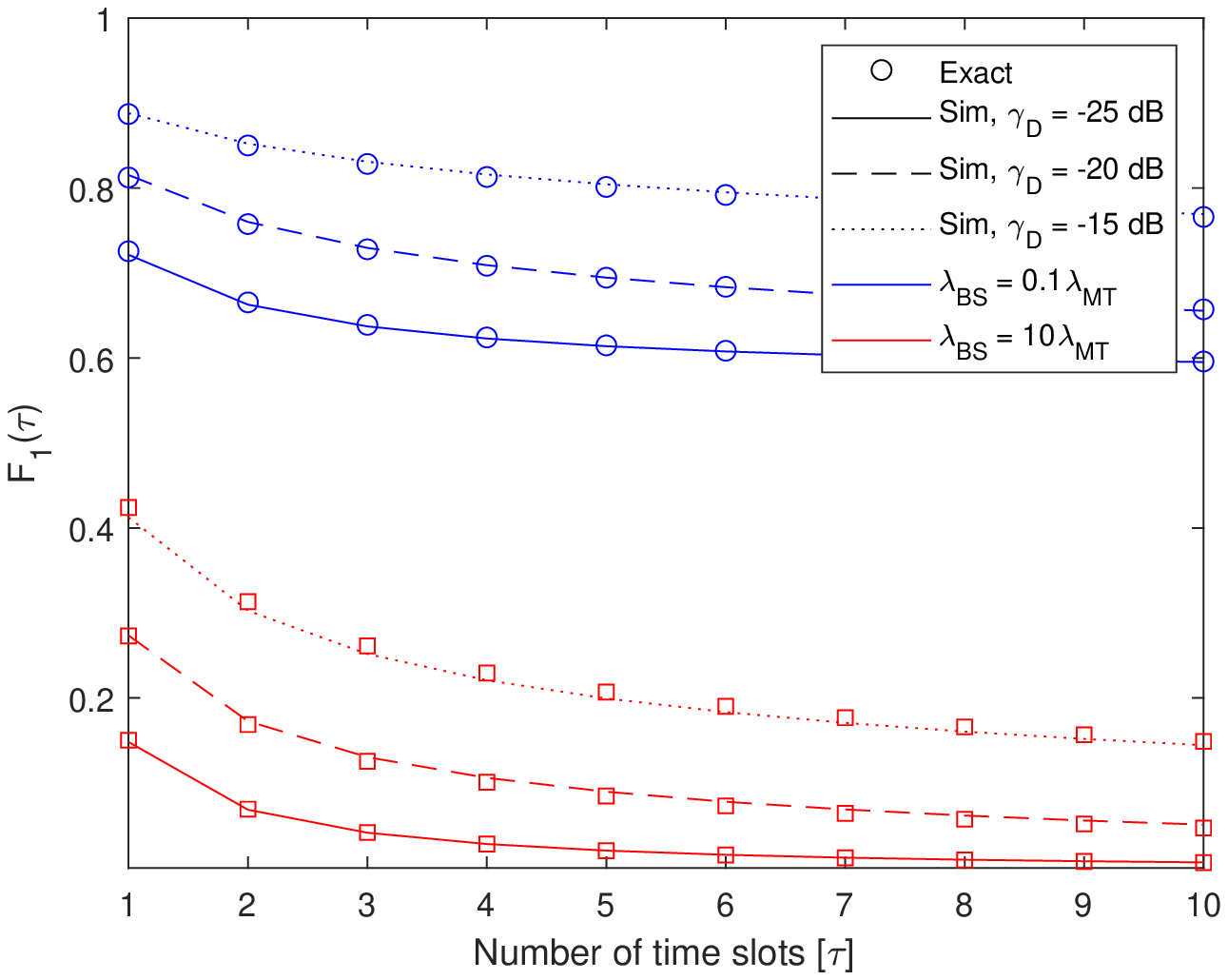}}
				\caption{SIR+ASNR-based coverage.}\label{fig:F1_SIR_ASNR}
			\end{subfigure}
			\caption{$F_1$ distribution for different coverage criteria.}
			\label{fig:F1}
		\end{center} \vspace{-1cm}
	\end{figure}
	
	\begin{figure}[!t]
		\begin{center}
			\includegraphics[width=0.6\linewidth]{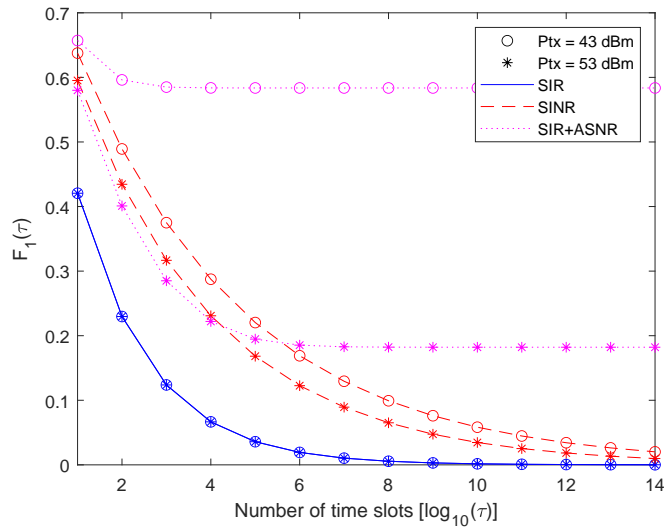}
			\caption{$F_1$ distribution for different coverage criteria as a function of number of time slots, decoding threshold, and BS density.}
			\label{fig:F1_asymptote}
		\end{center} \vspace{-1cm}
	\end{figure}
	
	Fig. \ref{fig:F1} validates the analytical frameworks for the $F_1$ distribution, as a function of number of time slots, decoding threshold, and BS density. The theoretical results are computed as a function of the time slots $\tau$ by using the analytical formulations in \eqref{eq:F1-distribution-SIR}, \eqref{eq:F1-distribution-SINR}, and \eqref{eq:F1-distribution-SIR+ASNR}. From the figure, we can see that the analytical formulations agree with the simulation results. We also observe that the delay performance is better for smaller values of the threshold and higher values of the BS density. Fig. \ref{fig:F1_asymptote} shows the $F_1$ distribution as $\tau$ tends to infinity, i.e., it illustrates the packet loss probability. As evident from the figure, the probability goes to zero as $\tau$ goes to infinity when the SIR and SINR-based coverage criteria are considered, while it goes to a non-zero value when the SIR+ASNR-based coverage criterion is utilized. We also observe that the delay performance improves when the transmit power $P$ increases.

	\subsection{$F_2$ Distribution}
	
	Fig. \ref{fig:F2} validates the computational framework for the $F_2$ distribution as a function of the number of time slots $T$, the decoding threshold $\gamma_D$, and the BS density $\lambdaBS$. The approximated results are obtained, for the SIR-based and SIR+ASNR-based coverage criteria, by using the inverse Laplace transform method and, for the SINR-based coverage criterion, the Beta approximation. From the figures, we can observe that the delay performance is better for lower values of the threshold and for higher values of the BS density. We also observe that, in a fully-loaded scenario (i.e., when $\lambdaBS \ll \lambdaMT$), the impact of the noise (for the SINR-based and SIR+ASNR-based coverage criterion) is not apparent. This is because, in this case, the system is nearly interference-limited, i.e., the interference is a more dominant factor as compared to the noise. On the other hand, the impact of the noise is clearly seen in lightly-loaded scenario (i.e., when $\lambdaBS \gg \lambdaMT$), since several BSs may be turned off.
	
	\begin{figure}[!t]
		\begin{center}
			\begin{subfigure}{0.48\columnwidth}
				{\includegraphics[width=0.98\linewidth]{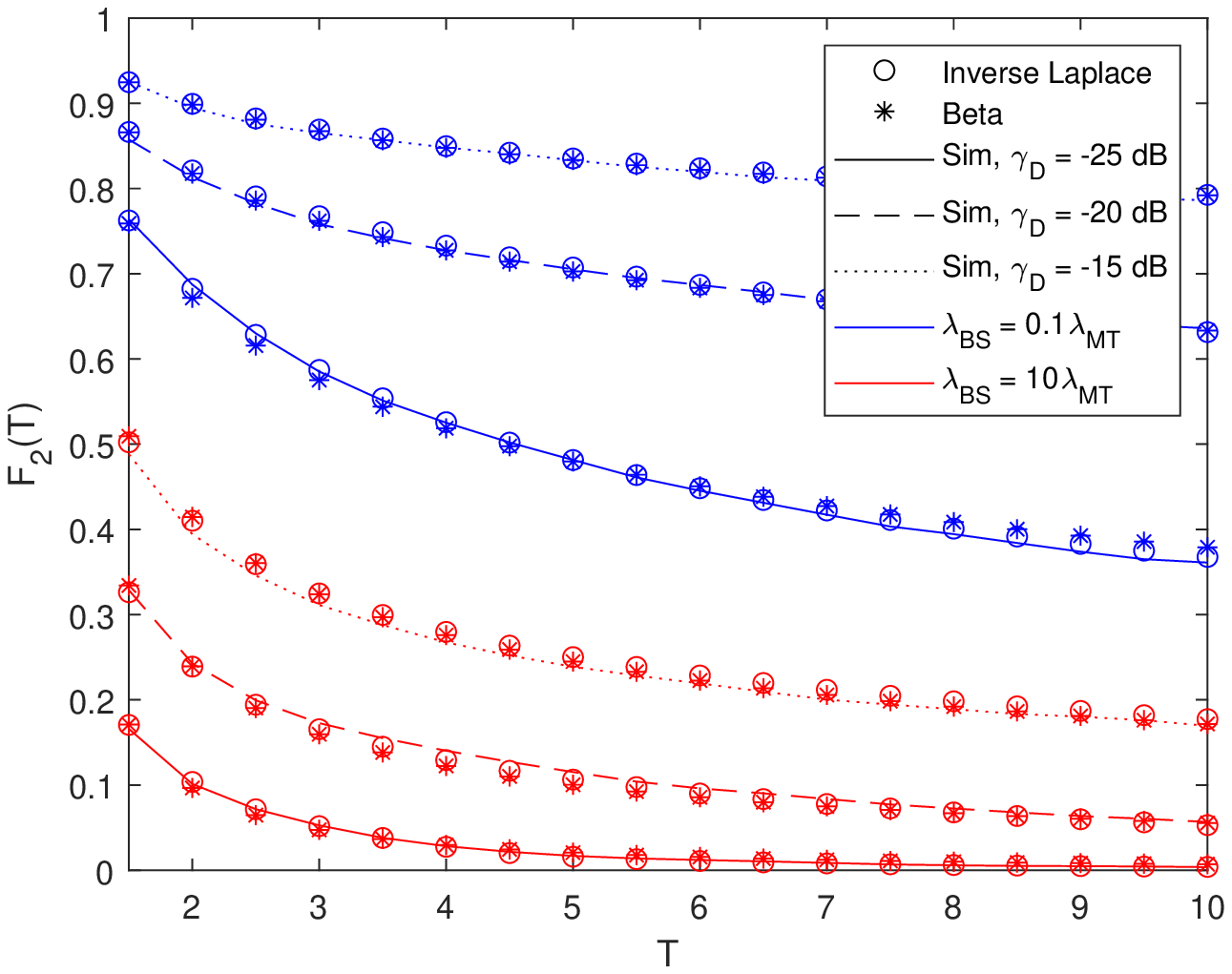}}
				\caption{SIR-based coverage.}\label{fig:F2_SIR}
			\end{subfigure}
			\begin{subfigure}{0.48\columnwidth}
				{\includegraphics[width=0.98\linewidth]{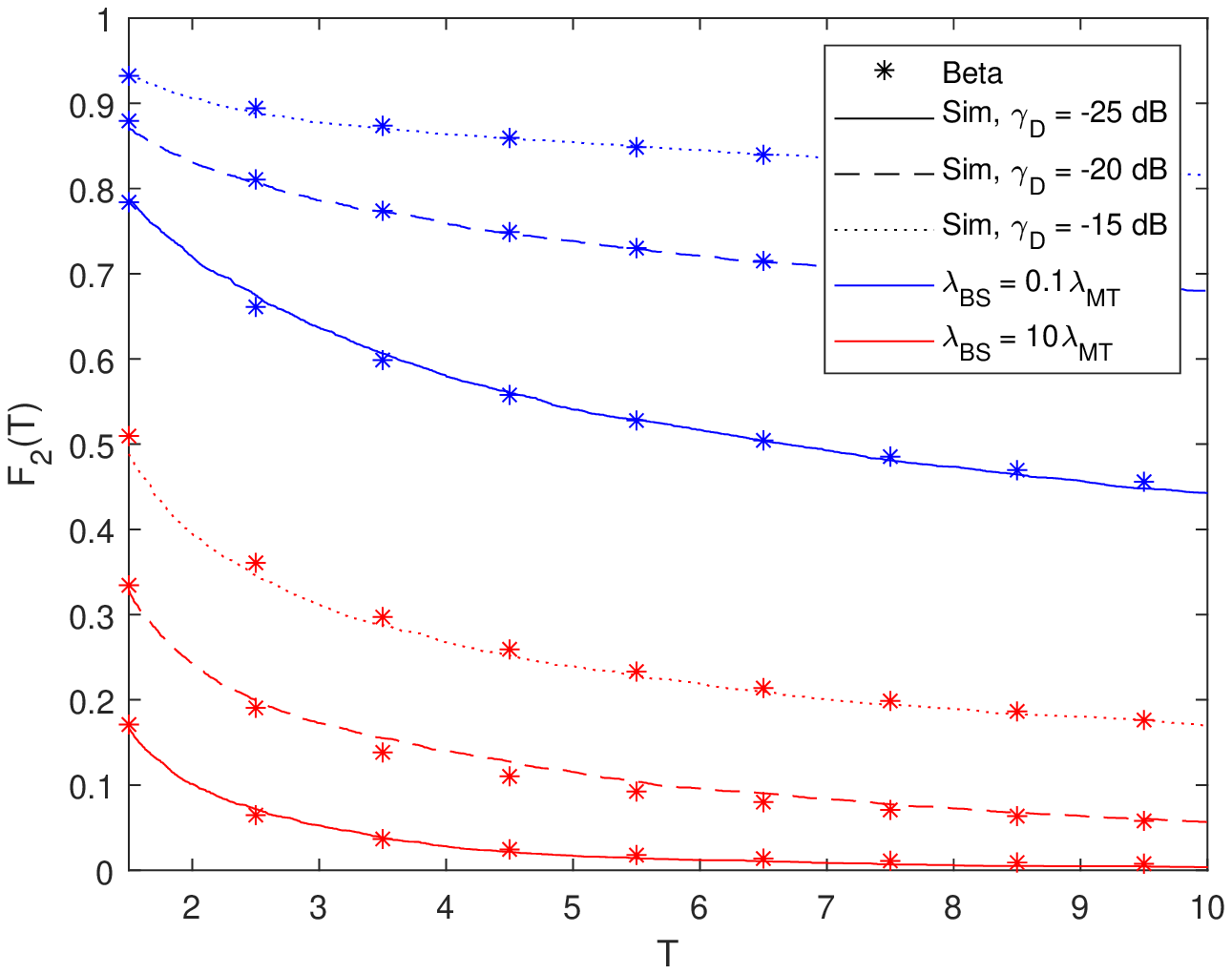}}
				\caption{SINR-based coverage.}\label{fig:F2_SINR}
			\end{subfigure}
			\begin{subfigure}{0.48\columnwidth}
				{\includegraphics[width=0.98\linewidth]{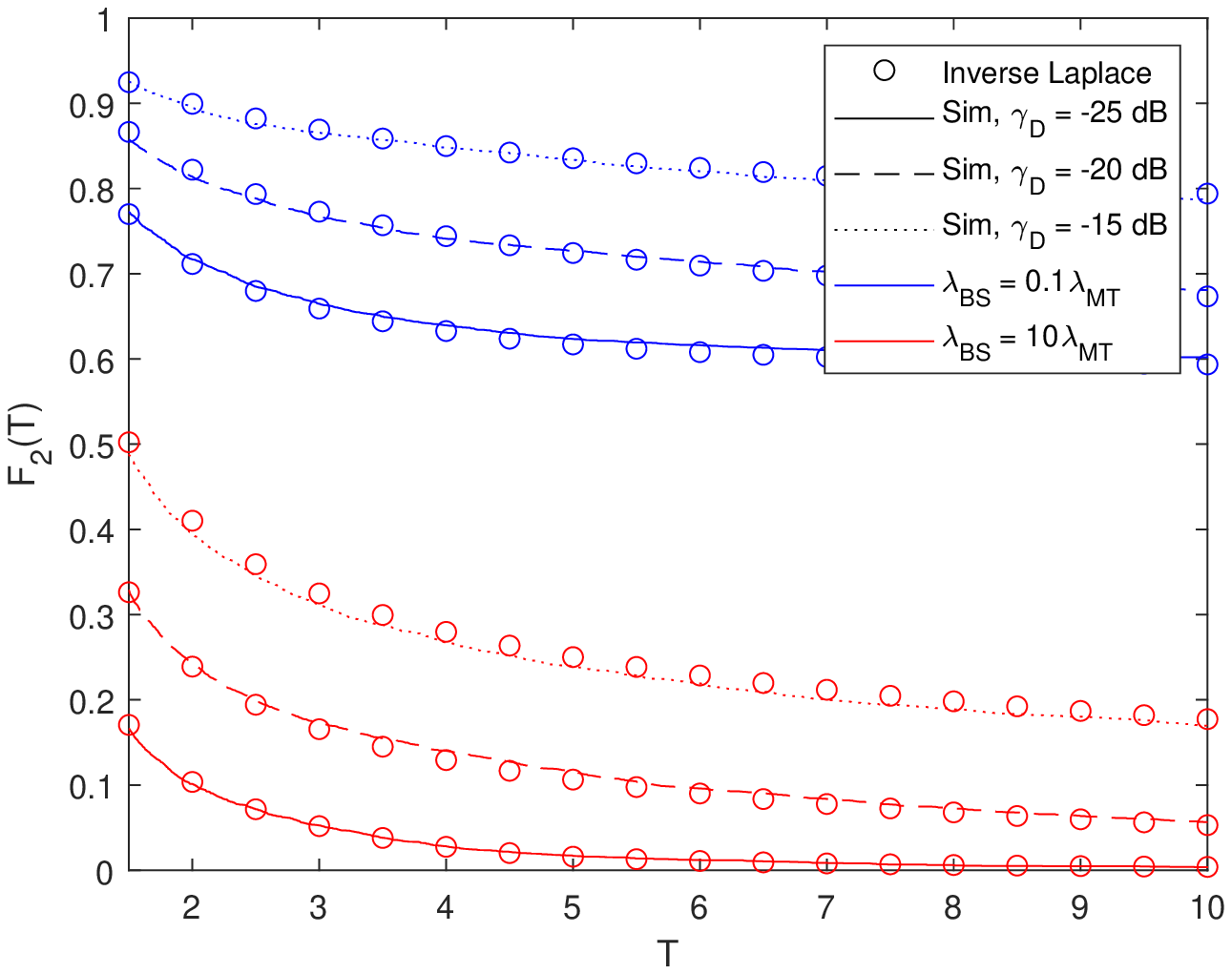}}
				\caption{SIR+ASNR-based coverage.}\label{fig:F2_SIR_ASNR}
			\end{subfigure}
			\caption{$F_2$ distribution for different coverage criteria.}
			\label{fig:F2}
		\end{center} \vspace{-1cm}
	\end{figure}

	\subsection{$F_3$ Distribution}
	
	Fig. \ref{fig:F3} validates the computational framework for the $F_3$ distribution as a function of the number of time slots $\tau \in \{5,10\}$ and the decoding threshold $\gamma_D$ when the BS density is $\lambdaBS = 0.1 \lambdaMT$. The approximated results are computed by using the inverse Laplace transform method (for the SIR-based and SIR+ASNR-based coverage criteria) and the Beta approximation (for the SINR-based coverage criterion). From the figures, we can see that the analytical formulations agree with the simulation results. We observe that the delay performance is better for lower values of the thresholds and for higher values of the BS density. When the SIR+ASNR-based coverage is considered, in addition, the $F_3$ distribution tends to a non-zero value when $x$ tends to zero.
	
	\begin{figure}[!t]
		\begin{center}
			\begin{subfigure}{0.48\columnwidth}
				{\includegraphics[width=0.98\linewidth]{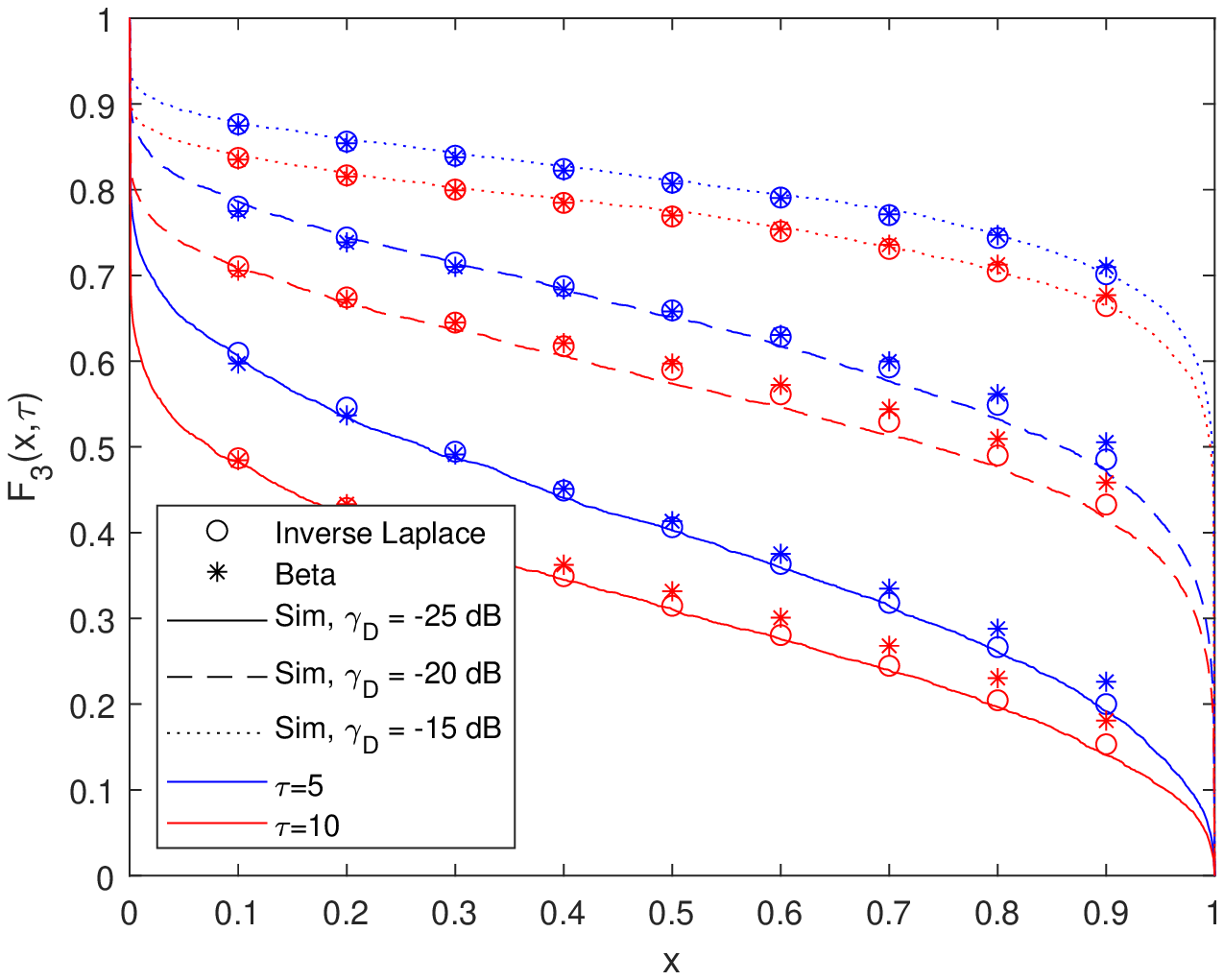}}
				\caption{SIR-based coverage.}\label{fig:F3_SIR}
			\end{subfigure}
			\begin{subfigure}{0.48\columnwidth}
				{\includegraphics[width=0.98\linewidth]{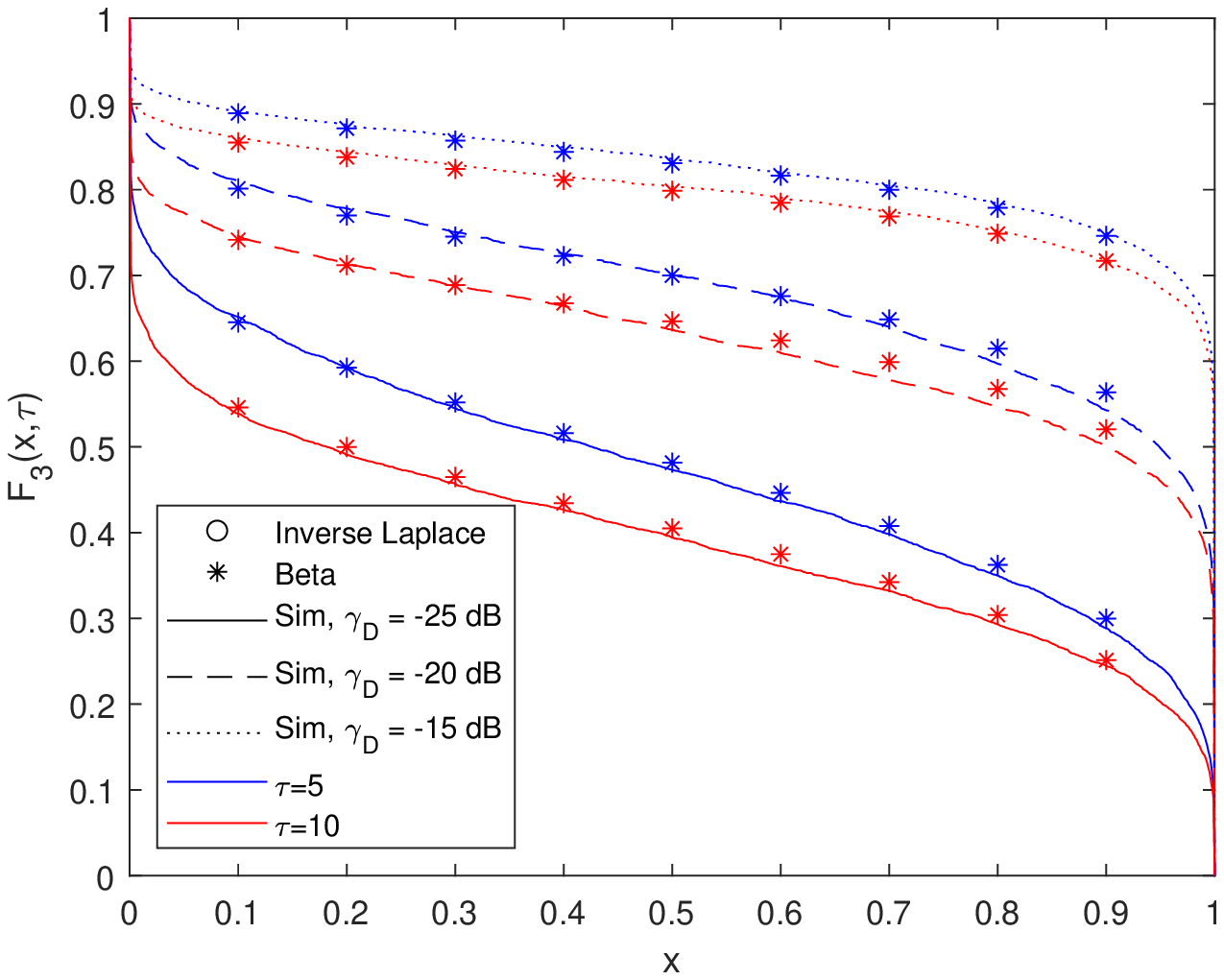}}
				\caption{SINR-based coverage.}\label{fig:F3_SINR}
			\end{subfigure}
			\begin{subfigure}{0.48\columnwidth}
				{\includegraphics[width=0.98\linewidth]{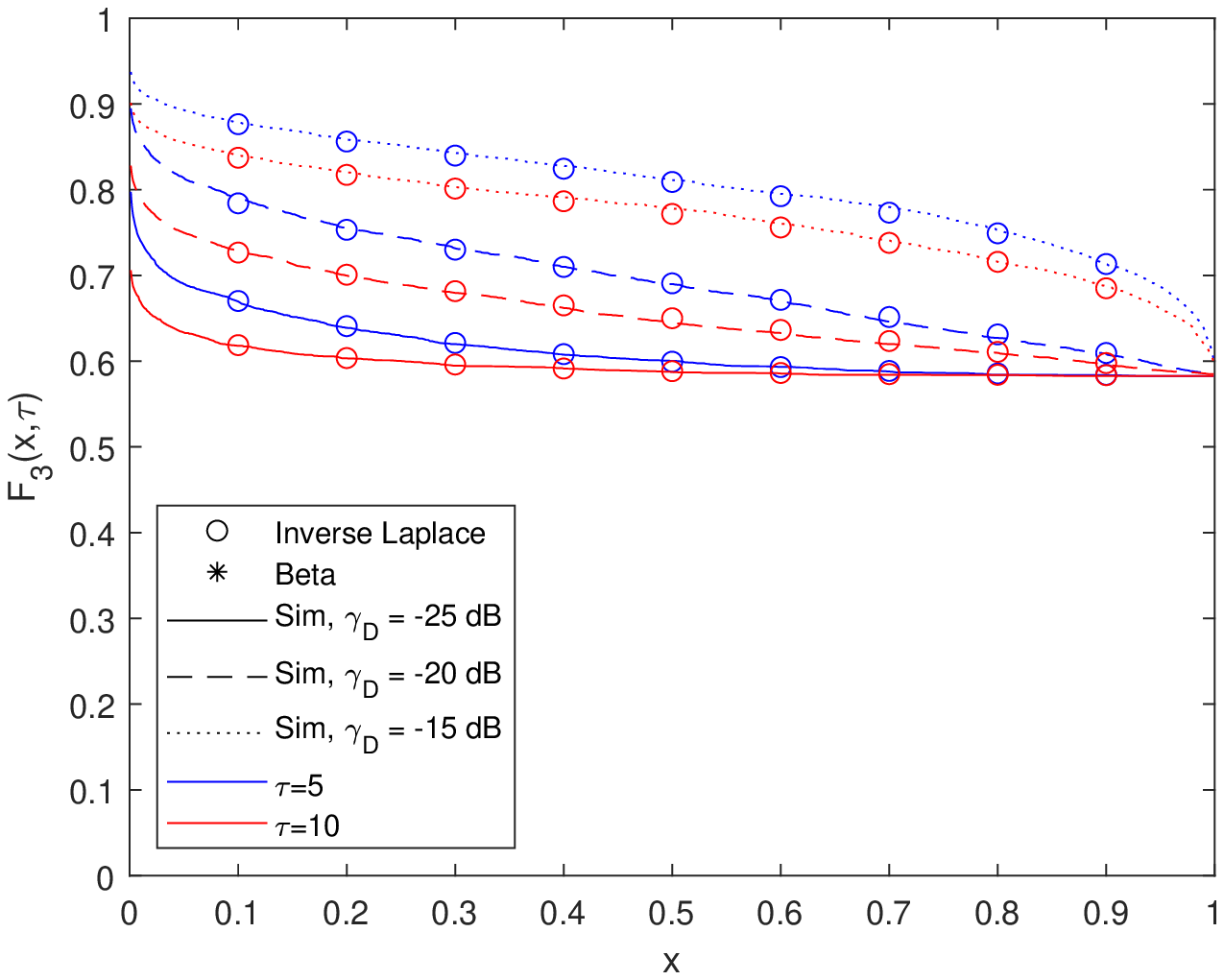}}
				\caption{SIR+ASNR-based coverage.}\label{fig:F3_SIR_ASNR}
			\end{subfigure}
			\caption{$F_3$ distribution for different coverage criteria.}
			\label{fig:F3}
		\end{center} \vspace{-1cm}
	\end{figure}

	\section{Conclusion} \label{sec:Conclusion}

	In this paper, we have introduced three different distributions to characterize the performance of large-scale cellular networks in terms of data packet delay. The proposed distributions provide complete information on the statistics of the delay, which overcomes the definition of local delay typically used in wireless communications. We have proposed analytical frameworks for evaluating the distributions for three different coverage criteria, namely the SIR-based, SINR-based, and SIR+ASNR-based coverage criteria. Several numerical approximations based on the inversion method, the Riemann sum, and the Beta distribution have been employed to circumvent some inherent numerical difficulties in computing the obtained analytical formulations for some case studies, as well as to render the computation of the proposed analytical methods more efficient. Also, we have analyzed the asymptotic behavior for one of the proposed delay distributions, which provides information on the network packet loss probability. Finally, through extensive simulations, we have shown that the obtained analytical formulations and approximations are accurate compared to Monte Carlo simulation results, and have studied the impact on the delay performance of several design parameters, such as the decoding threshold, the transmit power, and the network deployment density. The proposed approaches can facilitate the analysis and optimization of cellular networks subject to reliability constraints on the network packet delay that are not restricted to the local (average) delay, e.g., for delay sensitive applications.

	\begin{appendices}
		
		\section{Proof of Corollary \ref{corollary:delay-F1-relation}}\label{appendix:proof-of-delay-F1-relation}
		
		By virtue of the linearity of the expectation operator, we can rewrite the local delay in \eqref{eq:local-delay-def} as follows:
		\begin{align}
			D 
			= \e_{\Phi}\left[\e_h\left[\Delta_{\Phi}\right]\right]
			= \e_{\Phi} \left[\sum_{\tau=0}^{\infty} \tau \pr_h\left[\Delta_{\Phi} = \tau~|~\Phi\right]\right]
			= \sum_{\tau=0}^{\infty} \tau \e_{\Phi} \left[ \pr_h\left[\Delta_{\Phi} = \tau~|~\Phi\right]\right].
		\end{align}
		The proof follows from the definition of probability mass function (PMF) of the discrete random variable $\Delta_{\Phi}$, i.e., $\e_{\Phi}\left[\pr\left[\Delta_{\Phi} = \tau~|~\Phi\right]\right] = F_1(\tau) - F_1(\tau-1)$.

		\section{Proof of Lemma \ref{lemma:coverage-probability-general-formula}}
		\label{appendix:proof-of-coverage-probability-general-formula}
		
		When the $\sinr$-based coverage criterion is considered, the CCP is defined as follows:
		\begin{align}
			\pcovcon 
			&= \pr_h\left[\frac{{Ph_{0}}/{\ell(r_0)}}{I+W} > \gamma~|~\Phi\right] \\
			&\mystepA \e_{h_i}\left[e^{-\frac{\gamma(W + I)K{r_0}^\alpha}{P}}~\Bigr|~\Phi\right] 
			\mystepB e^{-\frac{\gamma WK{r_0}^\alpha}{P}}\e_{h_{i}}\left[\prod_{i\in \PhiI}e^{-\gamma h_{i}\left(\frac{r_0}{r_i}\right)^\alpha}~\Bigr|~\Phi\right] \nonumber\\
			&\mystepC e^{-\frac{\gamma WK{r_0}^\alpha}{P}}\prod_{i\in \PhiI}\e_{h_{i}}\left[e^{-\gamma h_{i}\left(\frac{r_0}{r_i}\right)^\alpha}~\Bigr|~\Phi\right] 
			\mystepD e^{-\frac{\gamma WK{r_0}^\alpha}{P}}\prod_{i\in \PhiI} \left(1+\gamma\left(\frac{r_0}{r_i}\right)^\alpha\right)^{-1}
		\end{align}
		where $(a)$ comes from the definition of $\sinr$ and $\pcovcon$ from the assumption that the random variable $h_0$ is exponentially distributed with unit mean, $(b)$ is obtained by substituting the total interference $I$ and by taking into account that the term $\exp(-\gamma W \ell(r_0)/P)$ is independent of $I$, $(c)$ follows from the fact that the fading coefficients $h_i$ are mutually independent, and $(d)$ follows because $h_i$ is an exponentially distributed random variable with unit mean.
		
		The CCP for the SIR-based coverage criterion can be derived by setting $W=0$. When the SIR+ASNR-based coverage criterion is considered, the CCP is:
		\begin{align}
			\pcovcon 
			&= \pr_h\left[h_{0} > \frac{\gamma K{r_0}^\alpha I}{P},~ W \leq \frac{P}{K{r_0}^\alpha\theta}~\Bigr|~\Phi\right]\nonumber\\
			&\mystepE \pr_h\left[h_{0} > \frac{\gamma K{r_0}^\alpha I}{P},~\Bigr|~\Phi\right] \mathbbm{1}\left({r_0 \leq \left(\frac{P}{KW\theta}\right)^{1/\alpha}}\right).
		\end{align}
		where $(e)$ holds because $W$ is independent of $h_0$ and $I$. The proof follows using similar steps as for the $\sinr$-based coverage and by noting that $\pr\left[W \leq {P}/({K{r_0}^\alpha\theta})|\Phi\right] = \mathbbm{1}\left({r_0 \leq \left(\frac{P}{KW\theta}\right)^{1/\alpha}}\right)$.

		\section{Proof of Theorem \ref{thm:local-delay-general}}\label{appendix:local-delay-general}
		
		By substituting  \eqref{eq:coverage-probability-general-formula} into \eqref{eq:local-delay-def}, and 	separating the conditioning on $\Phi$ into 
		$r_0$ and 
		$\PhiI$, we have:
		\begin{align}\label{eq:Delay-SINR-step1}
			D
			&= \e_{\PhiI,r_0}\left[\left(\mathcal{G}\left(r_0\right)\right)^{-1}\prod_{i\in \PhiI} \left(1+\gamma\left(\frac{r_0}{r_i}\right)^\alpha\right)\right] \nonumber\\
			&\mystepA \e_{r_0}\left[\left(\mathcal{G}\left(r_0\right)\right)^{-1}\e_{\PhiI}\left[\prod_{i\in \PhiI} \left(1+\gamma \left(\frac{r_0}{r_i}\right)^\alpha\right)\right]\right]
		\end{align}
		where $(a)$ follows from the fact that the term $\left(\mathcal{G}\left(r_0\right)\right)^{-1}$ is independent of $\PhiI$. According to the probability generating functional (PGFL) theorem of a PPP  \cite{chiu2013stochastic}, for a general function $f(x)$ and a PPP $\Psi$ over $\mathbb{R}^2$, we have: 
		\begin{equation}
			\e\left[\prod_{x\in\Psi} f(x)\right] = \exp\left(-\int_{\mathbb{R}^2}(1-f(x))\lambda(x)dx\right)
		\end{equation}
		where $\lambda(x)$ is the density function of the PPP. By expressing the whole domain $\mathbb{R}^2$ in polar coordinate, we can write $dx = rdrd\phi$ where $\phi$ is the azimuth angle of a particular point with respect to the origin. Using this notation, we obtain:		\begin{align}\label{eq:PGFL}
			\e_{\PhiI}\left[\prod_{i\in \PhiI} \left(1+\gamma\left(\frac{r_0}{r_i}\right)^\alpha\right)\right] 
			&= \exp\left(-\int\limits^{2\pi}_{0}\int\limits^{\infty}_{0}\left(1- \left(1+\gamma\left(\frac{r_0}{r_i}\right)^\alpha\right)\right)\lambda(r_i,\phi)r_idr_id\phi\right)\nonumber\\
			&\mystepB \exp\left(2\pi\lambdaBS \gamma L(\lambdaBS,\lambdaMT) {r_0}^\alpha \int\limits^{\infty}_{r_0} (r_i)^{1-\alpha} dr_i\right)\nonumber\\
			&= \exp\left(\frac{2\pi\lambdaBS\gamma L(\lambdaBS,\lambdaMT) {r_0}^2}{\alpha-2}\right).
		\end{align}
		where $(b)$ is obtained by substituting $\lambda(r_i,\phi)$ with \eqref{eq:intensity-inhomogenous} . The proof follows by substituting \eqref{eq:PGFL} into \eqref{eq:Delay-SINR-step1} and by then calculating the last expectation over $r_0$ with the aid of the PDF in (\ref{eq:PDF-of-r0}).

		\section{Proof of Theorem \ref{thm:F1-distribution-general}}
		\label{appendix:proof-of-F1-distribution-general}
		
		Since $\Delta_\Phi$ is a geometrically distributed random variable, its conditional CCDF is:
		\begin{align}
			\pr_h \left[\Delta_\Phi > \tau~|~\Phi\right] = \left(1-\pcovcon\right)^{\tau} = \sum_{k=0}^{\tau} (-1)^k C_k^\tau \left(\pcovcon\right)^k
		\end{align}
		From (\ref{eq:Delay-distribution-F1}) and by separating the conditioning on $\Phi$ into $r_0$ and $\PhiI$, we obtain:
		\begin{align}
			F_1(\tau) 
			&= \sum_{k=0}^{\tau} (-1)^k C_k^\tau \e_{\Phi}\left[ \left(\pcovcon\right)^k \right] \nonumber\\
			&= \sum_{k=0}^{\tau} (-1)^k C_k^\tau \e_{r_0}\left[\e_{\PhiI}\left[\left(\pcovcon\right)^{k}\right]\right] \\
			&\mystepA \sum_{k=0}^{\tau} C_{k}^{\tau} (-1)^{k} \e_{r_0}\left[\e_{\PhiI}\left[\prod_{i\in\PhiI}\left(1+\gamma\left(\frac{r_0}{r_i}\right)^{\alpha}\right)^{-k}\right]  \left(\mathcal{G}(r_0)\right)^k\right] \nonumber\\
			&\mystepB \sum_{k=0}^{\tau} C_{k}^{\tau}(-1)^{k}\e_{r_0}\left[\left(\mathcal{G}(r_0)\right)^k  e^{-\pi\lambdaBS L(\lambdaBS,\lambdaMT) r_0^2\mathcal{F}\left(k,\alpha,\gamma\right)}\right] \nonumber
		\end{align}
		where $(a)$ is obtained by substituting the CCP given in Lemma \ref{lemma:coverage-probability-general-formula} and $(b)$ follows by applying the PGFL theorem and using the notable integral $\int_{r_0}^{\infty}\left\{1-\left[1+\gamma\left(\frac{r_0}{r_i}\right)^{\alpha}\right]^{-k}\right\}r_idr_i = \frac{1}{2}{r_0}^2 \left({}_2F_1\left(-\frac{2}{\alpha},k;\frac{\alpha-2}{\alpha};-\gamma\right)-1\right)$.
		The proof follows from deconditioning the expectation on $r_0$.
		
		\section{Proof of Corollary \ref{corollary:F1-distribution-exact}} \label{appendix:proof-of-F1-distribution-exact}
		
		As for $F_1^{\textup{sir+asnr}}(\tau)$, we first substitute $\mathcal{G}(r_0) = \mathbbm{1}\left({r_0 \leq \left(\frac{P}{KW\theta}\right)^{1/\alpha}}\right)$ into \eqref{eq:F1-distribution-general} and obtain:
		\begin{align}
			F_1^{\textup{sir+asnr}}(\tau)
			&= 2\pi\lambdaBS \sum_{k=0}^{\tau} C_{k}^{\tau}(-1)^{k} \int_0^\infty \left(\mathbbm{1}\left({r_0 \leq \left(\frac{P}{KW\theta}\right)^{1/\alpha}}\right)\right)^k  e^{-\pi\lambdaBS r_0^2 \mathcal{F}\left(k,\alpha,\gamma\right) }  r_0dr_0 \nonumber\\
			&\mystepA 1 + 2\pi\lambdaBS \sum_{k=1}^{\tau} C_{k}^{\tau}(-1)^{k} \int_0^{\left(\frac{P}{KW\theta}\right)^{\frac{1}{\alpha}}}  e^{-\pi\lambdaBS r_0^2 \mathcal{F}\left(k,\alpha,\gamma\right) }  r_0dr_0
		\end{align}
		where $(a)$ follows from two steps: (i) splitting the sum for $k=0$ and for $k\geq 1$ as well as using the fact that $\left(\mathbbm{1}\left({r_0 \leq \left(\frac{P}{KW\theta}\right)^{1/\alpha}}\right)\right)^0 = 1$, and
		(ii) by taking into account that, for $k \geq 1$, $\left(\mathbbm{1}\left({r_0 \leq \left(\frac{P}{KW\theta}\right)^{{1/\alpha}}}\right)\right)^k$ is equal to $1$ if $r_0 \leq \left(\frac{P}{KW\theta}\right)^{1/\alpha}$ and it is equal to $0$ otherwise. Computing the final integral completes the proof.

		\section{Proof of Lemma \ref{lemma:positivity-of-pcovcon}} \label{appendix:positivity-of-pcovcon}
		
		We start the proof by first proving the following intermediate result: if $\{a_i\}$ is an infinite sequence such that $0<a_1<a_2<\ldots<1$, then the following holds: 
		\begin{equation}\label{eq:intermediate-result}
			\prod_{i=1}^{\infty} a_i > 0 \iff \sum_{i=0}^{\infty} (1-a_i) < \infty.
		\end{equation}
	
		First, we note that, due to the equivalence of implication, \eqref{eq:intermediate-result} is equivalent to $\prod_{i=1}^{\infty} a_i \leq 0 \iff \sum_{i=0}^{\infty} (1-a_i) \geq \infty$. Also, since $a_i > 0$ for all $i$, the latter implication is equivalent to:
		\begin{equation}\label{eq:intermediate-result-2}
			\prod_{i=1}^{\infty} a_i = 0 \iff \sum_{i=0}^{\infty} (1-a_i) = \infty.
		\end{equation}
		
		To prove \eqref{eq:intermediate-result}, it is sufficient to prove \eqref{eq:intermediate-result-2}. We consider two cases.
		\subsubsection{Case 1: $\{a_i\}$ converges to $0 < M < 1$}
		Assume that $\{a_i\}$ converges to $M$ where $0 < M < 1$, that is, $\lim_{i \to \infty} a_i = M < 1$. Since $a_i < M$ for all $i$, we have 
		$\prod_{i=1}^{\infty} a_i < \prod_{i=1}^{\infty} M \mystepA 0$
		where $(a)$ follows from the fact that $0< M < 1$. 
		Since $\lim_{i \to \infty} a_i = M < 1$, we have $\lim_{i \to \infty} (1-a_i) > 0$. Due to divergence test, this implies $\sum_{i=0}^{\infty} (1-a_i) = \infty$. This implies that \eqref{eq:intermediate-result-2} (and hence \eqref{eq:intermediate-result}) holds if $\{a_i\}$ converges to $M < 1$.
		
		\subsubsection{Case 2: $\{a_i\}$ converges to $M = 1$}
		Assume that $\{a_i\}$ converges to $M=1$, that is, $\lim_{i \to \infty} a_i = 1$. We have
		$\prod_{i=1}^{\infty} a_i = 0 \iffB \log\left(\prod_{i=1}^{\infty} a_i\right) = \log(0) 
		\iffC \sum_{i=1}^{\infty} \log(a_i) = -\infty$,
		where $(b)$ follows from taking the logarithm of both sides of the equation and $(c)$ follows from the fact that the logarithm of a product is equal to the sum of the logarithm as well as $\log(0) = -\infty$. 
		Without loss of generality, let us assume that  $a_1 > K$ for some $0<K<1$. Due to the fact that $\{a_i\}$ is an ordered sequence, we have $a_i > K$ for all $i$ and, hence $
		1/a_i < 1/a_1 = 1/K$.
		Multiplying both sides with $a_i - 1 < 0$, we have
		$\frac{{{a_i} - 1}}{a_i} > \frac{{{a_i} - 1}}{K}$.
		According to \cite[Eq. (1)]{topsok2006some}, for any $a_i>0$, we obtain $\frac{{{a_i} - 1}}{a_i} \leq \log \left( {{a_i}} \right) \leq {a_i} - 1$.
		Then, we obtain $\frac{{{a_i} - 1}}{K} \leq \log \left( {{a_i}} \right) \leq {a_i} - 1$.
		
		To prove \eqref{eq:intermediate-result-2}, we need to prove two implications: (1) if $\sum_{i=1}^{\infty} \log(a_i) = -\infty$ then $\sum_{i=1}^{\infty} (a_i-1) = -\infty$; and (2) if $\sum_{i=1}^{\infty} (a_i-1) = -\infty$ then $\sum_{i=1}^{\infty} \log(a_i) = -\infty$.
		
		First, let us assume that $\sum_{i=1}^{\infty} \log(a_i) = -\infty$ holds. From the inequality $\frac{a_i-1}{K} \leq \log(a_i)$, we have $\sum_{i=1}^{\infty} (a_i-1) \leq K\sum_{i=1}^{\infty} \log(a_i) = -\infty$.
		Thus, $\sum_{i=1}^{\infty} \log(a_i) = -\infty \implies \sum_{i=1}^{\infty} (a_i-1) = -\infty$.
		Then, let us assume that $\sum_{i=1}^{\infty} (a_i-1) = -\infty$. 
		From the inequality $\log(a_i) \leq a_i - 1$, we have $\sum_{i=1}^{\infty} \log(a_i) \leq \sum_{i=1}^{\infty} (a_i-1) = -\infty$.
		Thus, $\sum_{i=1}^{\infty} (a_i-1) = -\infty \implies \sum_{i=1}^{\infty} \log(a_i) = -\infty$.
		
	Based on these results, we conclude that \eqref{eq:intermediate-result-2} (and hence \eqref{eq:intermediate-result}) holds for when $\{a_i\}$ converges to $M = 1$. This completes the proof of \eqref{eq:intermediate-result}.
	
	Let us rearrange the indices of the interferers $i$ such that they are ordered from the closest to the furthest from the typical user. Then, under the SIR-based coverage criterion, the CCP in \eqref{eq:covcon-SIR} can be written as follows:
	\begin{equation}
		\prod_{i\in\PhiI} \left(1 + \gamma \left(\frac{r_0}{r_i}\right)^{\alpha}\right)^{-1} = \prod_{n = 1}^{N} \left(1 + \gamma \left(\frac{r_0}{r_n}\right)^{\alpha}\right)^{-1}.
	\end{equation}
	We assume that no two interferers have exactly the same distance from the typical user, i.e., $r_1 < r_2 < \ldots< r_N$. Denote $a_n = \left(1 + \gamma \left(\frac{r_0}{r_n}\right)^{\alpha}\right)^{-1}$.
	Thus, $\{a_n\}$ is an ordered sequence where $0 < a_1 < a_2 < \ldots  < 1$.
	Clearly, $\pcovcon > 0$ if $N < \infty$. Assume $N=\infty$. According to \eqref{eq:intermediate-result}, for $\pcovcon$ to be non-zero, it must hold that $\sum_{n=0}^{\infty} (1-a_n) < \infty$. Denote $b_n = {\gamma \left(\frac{r_0}{r_n}\right)^{\alpha}}/{\left(1 + \gamma \left(\frac{r_0}{r_n}\right)^{\alpha}\right)}$. Then, we have the following:
	\begin{align}
		\sum_{n=0}^{\infty} (1-a_n)
		= 
		\sum_{n=0}^{\infty} \frac{\gamma \left({r_0}/{r_n}\right)^{\alpha}}{\left(1 + \gamma \left({r_0}/{r_n}\right)^{\alpha}\right)}, \qquad
		\frac{b_{n+1}}{b_n}
		= 
		\frac{(r_n)^\alpha + \gamma(r_0)^\alpha}{(r_{n+1})^\alpha + \gamma(r_0)^\alpha}
	\end{align}
	Since $r_n < r_{n+1}$ for all $n$, we have $\lim_{n\to\infty} \Bigl|\frac{b_{n+1}}{b_n}\Bigr| < 1$.
	Therefore, according to the convergence test, we obtain $\sum_{n=0}^{\infty} (1-a_n) < \infty$. This completes the proof.
		
		\section{Proof of Proposition \ref{proposition:limit-of-F1-general-case}}\label{appendix:limit-of-F1-general-case}
		
		Define $Z = \pr_h\left[\Delta_{\Phi} >\tau~|~\Phi\right]  = (1-\pcovcon)^\tau$. 
		We can compute $\lim_{\tau\to\infty} Z 
		= \lim_{\tau\to\infty} (1-\pcovcon)^\tau = \mathbbm{1}(\pcovcon = 0)$.
		Since $Z$ only takes values between 0 and 1, the $F_1$ distribution is given as $F_1(\tau) = \e\left[Z\right] = \int_0^1 \pr\left[Z>z\right]dz$
		and hence the limit of $F_1(\tau)$ as $\tau \to \infty$ is computed as:
		\begin{align}
			\lim_{\tau\to\infty} F_1(\tau)
			&= \lim_{\tau\to\infty} \int_0^1 \pr\left[Z > z\right]dz \\
			&\mystepA \int_0^1 \pr\left[1>z\right]\pr\left[\pcovcon = 0\right]dz  + \int_0^1 \pr\left[0>z\right]\pr\left[\pcovcon > 0\right]dz \nonumber
		\end{align}
		where $(a)$ is obtained by separating the integration into two different cases: $\pcovcon = 0$ and $\pcovcon > 0$.
		By using the fact that $\pr\left[0>z\right]=0$ and $\pr\left[1>z\right]=1$ for all $z\in(0,1)$, we have $P_e = \lim_{\tau\to\infty} F_1(\tau) = \pr\left[\pcovcon = 0\right]$. Also, from Lemma \ref{lemma:coverage-probability-general-formula}, we have:
		\begin{align}
			P_e 
			&= \pr\left[\mathcal{G}(r_0)\prod_{i\in \PhiI} \left(1+\gamma\left(\frac{r_0}{r_i}\right)^\alpha\right)^{-1} = 0\right].
		\end{align}
		The proof follows from the fact that, according to Lemma \ref{lemma:positivity-of-pcovcon}, the product inside the probability operator is always greater than zero.

	\end{appendices}
	
	\bibliography{Draft_TWC_Delay} 
	\bibliographystyle{ieeetr}

\end{document}